\newtheorem{teo}{Theorem}
\newtheorem{prop}{Proposition}
\newtheorem{rem}{Remark}
\newtheorem{lem}{Lemma}
\newtheorem{defi}{Definition}
\newcommand{\M}{{\mathbb M}}
\newcommand{\am}{A^\M}
\newcommand{\az}{A^\Zset}
\newcommand{\Zset}{{\mathbb Z}}
\newcommand{\Nset}{{\mathbb N}}
\newcommand{\lang}{\mathcal L}
\newcommand{\length}[1]{\left|#1\right|}
\newcommand{\card}[1]{\left|#1\right|}
\newcommand{\et}{\textrm{ and }}
\newcommand{\xpr}[1]{"#1"}
\newcommand{\abs}[1]{\left|#1\right|}
\newcommand{\soit}[1]{\left|\everymath{\displaystyle\everymath{}}\begin{array}{ll}#1\end{array}\right.}
\newcommand{\both}[1]{\left\{\everymath{\displaystyle\everymath{}}\begin{array}{l}#1\end{array}\right.}
\newcommand{\co}[2]{\left\llbracket #1,#2\right\llbracket}
\newcommand{\cc}[2]{\left\llbracket #1,#2\right\rrbracket}
\newcommand{\oo}[2]{\left\rrbracket #1,#2\right\llbracket}
\newcommand{\oc}[2]{\left\rrbracket #1,#2\right\rrbracket}
\newcommand{\ci}[1]{\co{#1}\infty}
\newcommand{\io}[1]{\oo{-\infty}{#1}}
\newcommand{\oi}[1]{\oo{#1}\infty}
\newcommand{\ic}[1]{\oc{-\infty}{#1}}
\newcommand{\scc}[2]{_{\cc{#1}{#2}}}
\newcommand{\sco}[2]{_{\co{#1}{#2}}}
\newcommand{\soo}[2]{_{\oo{#1}{#2}}}
\newcommand{\sci}[1]{_{\ci{#1}}}
\newcommand{\sio}[1]{_{\io{#1}}}
\newcommand{\soi}[1]{_{\oi{#1}}}
\newcommand{\sic}[1]{_{\ic{#1}}}
\newcommand{\sett}[2]{\left\{\left.#1\vphantom{#2}\right|#2\right\}}
\newcommand{\set}[3]{\sett{#1\in#2}{#3}}
\newcommand{\ie}{\textit{i.e.}\ }
\newcommand{\ipart}[1]{\left\lfloor #1\right\rfloor}
\title{Zigzags in Turing machines\thanks{This work has been supported by ECOS-Sud Project and CONICYT FONDECYT \#1090568.}}
\author{Anah\'i Gajardo\inst1 \and Pierre Guillon\inst2}
\institute{Departamento de Ingenier\'ia Matem\'atica, Universidad de Concepci\'on,
Casilla 160-C, Concepci\'on, Chile
\email{anahi@ing-mat.udec.cl}
\and
DIM - CMM, UMI CNRS 2807, Universidad de Chile, Av. Blanco Encalada 2120,
Santiago, Chile
\email{pguillon@dim.uchile.cl}
}
\begin{document}

\maketitle

\begin{abstract}
We study one-head machines through symbolic and topological dynamics.
In particular, a subshift is associated to the system, and we are interested in its complexity in terms of realtime recognition.
We emphasize the class of one-head machines whose subshift can be recognized by a deterministic pushdown automaton.
We prove that this class corresponds to particular restrictions on the head movement, and to equicontinuity in associated dynamical systems.
\end{abstract}

\noindent\textbf{Keywords:} Turing machines, discrete dynamical systems, subshifts, formal languages.

We study the dynamics of a system consisting in a finite automaton (the head) that can write and move over an infinite tape, like a Turing machine.
We use the approach of symbolic and topological dynamics.
Our interest is to understand its properties and limitations, and how dynamical properties are related to computational complexity.

This approach was initiated by K\r urka in \cite{Kurk} with two different topologies: one focused on the machine head, and the other on the tape.
The first approach was further developed in~\cite{Nich,Opro}.
More recently, in \cite{Gaja07,GajaJAC}, a third kind of dynamical system was associated to Turing machines, taking advantage of the following specificity: changes happen only in the head position whilst the rest of the configuration remains unaltered.
The whole evolution can therefore be described by the sequence of states taken by the head and the symbols that it reads.
This observation actually yields a factor map between K\r urka's first dynamical system and a one-sided subshift.

In \cite{Gaja07}, it has been proved that machines with a sofic subshift correspond to machines whose head makes only bounded cycles.
We prove here a similar characterization of machines with a shift that can be recognized by a deterministic pushdown automaton.
Moreover, we establish links between these two properties and equicontinuity in all three spaces.

In the first section, we recall the definitions and fundamental results.
The second section is devoted to defining the different dynamical systems associated to one-head machines, and to stating basic results about equicontinuity within these systems.
In the last section, we define the class of bounded-zigzag machines and state our main results.
\section{Preliminaries}
Consider a finite alphabet $A$, and $\M$ to stand either for $\Nset$ or for $\Zset$.
For a finite word $u\in A^*$, we will note $\length u$ its length, and index its letters from $0$ to $\length u-1$, unless specified otherwise.
We denote $A^{\le m}$ the set of words on $A$ of length at most $m\in\Nset$.
If $i,j\in\Zset$ and $i\le j$, $\cc ij$ will denote the closed interval of integers $i,\ldots,j$, $\co ij=\cc i{j-1}$, etc.
A point $x\in\am$ will be called \emph{configuration}.
For a configuration or a word $x$, we define $x\scc ij=x_i\ldots x_j$.
$A\sqcup B$ will denote the disjoint union of two sets $A$ and $B$.
\subsection{Topological dynamics}\label{sec:top}
A \emph{dynamical system} (DS) is a pair $(X,F)$ where $X$ is a metric space and $F$ a continuous self-map of $X$. Sometimes the space will be implicit.

The orbit of a point $x\in X$ is the set of the $F^t(x)$ for all \emph{iteration} ${t>0}$.
A point $x$ is called \emph{preperiodic} if there exist two naturals $q$, $p$ such that $F^{q+p}(x)=F^q(x)$.
If $q$ and $p$ are minimal, then $q$ is called the \emph{transient} and $p$ the \emph{period}.
When $t=0$, $x$ is called \emph{periodic}.

A point $x\in X$ is \emph{isolated} if there is an $\varepsilon>0$ such that the ball of radius $\varepsilon$ and center $x$ contains only $x$.
A point $x\in X$ is \emph{equicontinuous} for $F$ if, for any $\varepsilon>0$, there exists some $\delta>0$ such that, for any $y\in X$ with $d(x,y)<\delta$, we have that, for all $t\in\Nset$, $d(F^t(x),F^t(y)) < \varepsilon$.
The DS $(X,F)$ is \emph{equicontinuous} if, for any $\varepsilon>0$, there exists some $\delta>0$ such that, for any $x,y\in X$ with $d(x,y)<\delta$, we have that, for all $t\in\Nset$, $d(F^t(x),F^t(y)) < \varepsilon$.
When $X$ is compact, this is equivalent to having only equicontinuous points.
The DS $(X,F)$ is \emph{almost equicontinuous} if it has a residual set of equicontinuous points.

A DS $(X,F)$ is a \emph{factor} of a DS $(Y,G)$ if $\phi G=F\phi$ for some continuous onto map $\phi:Y\to X$, then called a \emph{factor map}.
\subsection{Subshifts}
We can endow the space $\am$ of {configurations} with the product of the discrete topology of $A$.
It is based on the cylinders $[u]_i=\set x\am{x\sco i{i+k}=u}$, where $i\in\M$, $k\in\Nset$ and $u\in A^k$; this notation shall be extended to semi-infinite words.
If $\M=\Zset$, $u\in A^{2r+1}$ and $r\in\Nset$, we note $[u]=[u]_{-r}$.

This topology corresponds to the metric $d:x,y\mapsto2^{-\min_{x_i\ne y_i}\abs i}$.
In other words, $d(x,y)\le 2^{-i} \Leftrightarrow x\scc{-i}i=y\scc{-i}i$; two points are \xpr{close to each other} if they coincide \xpr{around position 0}.
It is easy to extend this metric to spaces $\am\times Q$ and $\am\times Q\times\Zset$.
In that setting, $\am$ and $\am\times Q$ are compact, but $\am\times Q\times\Zset$ is not.

The \emph{shift} map is the function $\sigma:\am\to\am$ defined by $\sigma(x)_i=x_{i+1}$.
A \emph{subshift} $\Sigma$ is a closed subset of $\am$ which is also invariant by $\sigma$. It can be seen as a compact DS where the map is $\sigma$.

A subshift $\Sigma$ is characterized by its \emph{language}, containing all finite patterns that appear in some of its configurations: $\lang(\Sigma)=\sett{z\sco ij}{z\in\Sigma\et i,j\in\M}$. We denote $\lang_n(\Sigma)=\lang(\Sigma)\cap A^n$.
If the language $\lang(\Sigma)$ is regular, then we say that $\Sigma$ is \emph{sofic}.
Equivalently, a sofic subshift can be seen as the set of labels of infinite paths in some finite arc-labeled graph; this graph basically corresponds to the finite automaton that recognizes its language, without initial nor terminal state.

Any subshift can also be defined from a set of forbidden finite patterns $\mathcal F\subset A^*$ by $\Sigma=\set z\am{\forall i,j\in\M,z\sco ij\notin\mathcal F}$.
If $\mathcal F$ can be chosen to be finite, then $\Sigma$ is a subshift \emph{of finite type} (SFT).

A DS $F$ on $\am$ is completely determined by the family of its factor subshifts, \ie the factors which are also subshifts in some alphabet.
Up to some letter renaming, all factor subshifts of $F$ are of the form $(\mathcal P(F^j(x)))_{j\in\Nset}$, where $\mathcal P$ is a finite partition of $X$ into closed open sets, and $\mathcal P(y)$ denotes the unique element of this partition which contains $y\in X$.
%
\subsection{Deterministic pushdown automata}
\begin{defi}
A \emph{deterministic pushdown automaton} (DPDA) is a tuple\break $(A,\Omega,\Gamma,\bot,\lambda,o_0,F)$ where $A$ is the \emph{input alphabet}, $\Omega$ is the \emph{set of states }, $\Gamma$ is the \emph{stack alphabet}, $\bot\in\Gamma$ is the \emph{stack bottom}, $o_0$ is the \emph{initial state}, $F\subset\Omega$ is the \emph{subset of terminal states } and $\lambda:A\times\Omega\times\Gamma\to\Omega\times\Gamma^{\le2}$ is the \emph{transition function} such that: if $\lambda(a,o,\bot)=(o',\mu)$, then $\mu$ contains exactly one $\bot$, which is on its end, and if $\lambda(a,o,\beta)=(o',\mu)$ with $\beta\ne\bot$, then $\mu$ does not contain any $\bot$.

An (infinite) arc-labeled graph $G$ is associated to the automaton.
Its set of vertices is $\Omega\times(\Gamma\setminus\{\bot\})^*\bot$, and there exists an arc from $(e,\mu)$ to $(f,\nu)$ labeled $a$ if and only if $\nu=\rho\mu\sco1{\length\mu-1}$ and $\lambda(a,e,\mu_0)=(f,\rho)$. The word $\mu$ is called the \emph{stack content}.

The language $L$ recognized by the automaton consists of all words $w$ in $A^*$ such that there exists a finite path in $G$ with label $w$, starting on vertex $(o_0,\bot)$ and ending in some vertex $(o,\mu)$ with $o\in F$. A subshift is recognized by the automaton if its language is recognized by the automaton.
\end{defi}
\newcommand{\pile}[2][]{(o_{#2},#1\mu^{#2})}
\newcommand{\pilo}[2][]{(o_{#2},#1\mu_0^{#2})}
\section{Turing Machines}
In this article, a Turing Machine (TM) is a triple $(A,Q,\delta)$, where $A$ and $Q$ are the finite \emph{tape alphabet} and \emph{set of state}, and $\delta:A\times Q\to A\times Q\times\{-1,1\}$ the \emph{rule}.
We do not particularize any halting state.
We can see the TM as evolving on a bi-infinite tape.
The phase space is $X=\az\times Q\times\Zset$.
Any element of $X$ is called a configuration and represents the state of the tape, the state of the head and its position.
We consider here the topology introduced in Section \ref{sec:top}.
Thus, the farther the head is from the center, the less important become the read symbols, but the head state and position remain important.
On this (non-compact) space, $T:X\to X$ by $T(x,q,i)=(x\sio iax\soi i,p,i+d)$ if $\delta(x_i,q)=(a,p,d)$ gives the corresponding DS.
We can extend the shift function to TM configurations by $\sigma:(x,q,i)\mapsto(\sigma(x),q,i-1)$, and it clearly commutes with $T$.

We can represent the head state and position by adding a ``mark'' on the tape.
If we want a compact space, this corresponds to the following phase space:
\[X_H=\set x{(A\sqcup(A\times Q))^\Zset}{\card{\set i{\Zset}{x_i\in A\times Q}}\le1}\]
where the head position is implicitly given by the only cell with a symbol in $(A\times Q)$, and the function $T_H: X_H\longrightarrow X_H$ is defined by $T_H(x\sio i(b,q)x\soi i)=y\sio{i+d}(y_{i+d},p)y\soi{i+d}$, where $y=x\sio iax\soi i$ and $\delta(b,q)=(a,p,d)$, and $T_H(x)=x$ if $x$ does not contain any symbol in $A\times Q$.
With the topology of $X_H$ as a subshift of $(A\sqcup(A\times Q))^\Zset$, the head state and movement are less important when the head is far from $0$.
This model corresponds to the TM \emph{with moving head} defined by K\r urka in \cite{Kurk}, which highlights the tape configuration. It is a particular case of cellular automaton, \ie based on some uniformly-applied local rule.
We can intuitively see a continuous injection $\Phi:X\to X_H$ such that $\Phi T=T_H\Phi$ and $\Phi\sigma=\sigma\Phi$.

Focusing on the movements and states of the head, \cite{Kurk} also defines the system \emph{with moving tape} $T_T:X_T\to X_T$ on the (compact) space\break $X_T=\az\times Q$ by $T_T(x,q)=(\sigma^d(x\sio0ax\soi0),p)$ if $\delta(x_0,q)=(a,p,d).$
Here the head is assumed to be always at position $0$, and the tape is shifted at each step according to the rule.
There is a continuous non-injective surjection $\Psi:X\to X_T$ such that $\Psi T=T_T\Psi$.

Finally, we can have a vision centered on the head and which emphasizes only the relevant part of the configuration, as in \cite{Gaja07,GajaJAC}.
The system $S_T$ is the one-sided subshift on alphabet $Q\times A$, which is the image of the factor map $\tau_T:X_T\to S_T$ defined by $\tau_T(x,q)_t=(y_0,p)$ if $(y,p)=T_T^t(x,q)$.
In other words, it represents the sequence of pairs corresponding to the successive states of the head and the letters that it reads.
{
\[
\xymatrix{
   X_H \ar[d]_{T_H}& X \ar@{>->}[l]_\Phi \ar[d]_T \ar@{->>}[r]^\Psi& X_T \ar[d]_{T_T} \ar@{->>}[r]^{\tau_T} & S_T \ar[d]_\sigma
   \\
   X_H & X \ar@{>->}[l]_\Phi \ar@{->>}[r]^\Psi & X_T \ar@{->>}[r]^{\tau_T} & S_T
 }
\]
}
Similarly, we will note $S_H$ the one-sided subshift on alphabet $Q\sqcup(A\times Q)$ which is the image of the factor map $\tau_H:X_H\to S_H$ defined by $\tau_H(x)_t=T_H^t(x)_0$. Unlike $S_T$, this subshift does not always contain the relevant information, since the head can be completely absent.
\subsection{Equicontinuous configurations}\label{ss:eqpt}
Topological notions can actually formalize various types of head movements.
One first example is equicontinuity of the DS $T_T$.
It is strongly related with periodicity, as the next remark establishes.
This is natural since the symbol that the head reads in $X_T$ is always at position $0$.
Hence, if the head visits an infinite number of cells, say to the right, any perturbation on the initial configuration will get to position $0$, and thus will become largely significant for this topology.
We conclude the following. 
\begin{rem}\label{r:eqpt}
Let $x\in X$ be a configuration and $T$ a machine over $X$. The following statements are equivalent:
\begin{enumerate}
\item The head position on $x$ is bounded.
\item $x$ is \textbf{preperiodic} for $T$.
\item $\Phi(x)$ is \textbf{preperiodic} for $T_H$.
\item $\Psi(x)$ is \textbf{equicontinuous} for $T_T$.
\item $\tau_T\Psi(x)$ is \textbf{preperiodic} and \textbf{isolated} --\ie \textbf{equicontinuous}-- in $S_T$.
\end{enumerate}
Moreover, if one of the above occurs, then $\Psi(x)$ is preperiodic for $T_T$, $x$ is equicontinuous for $T$ and $\Phi(x)$ is equicontinuous for $T_H$.
The set of equicontinuous configurations for $T_T$ is a union of cylinders of $X_T$.
\end{rem}
If $\Psi(x)$ is preperiodic for $T_T$, then $\tau_T\Psi(x)$ is also periodic (for $\sigma$), but $x$ need not be periodic for $T$.
For example, a machine that simply moves to the left on every symbol will produce a periodic point for $T_T$ if the initial configuration $x$ is spatially periodic.
From the previous remark, such a point is not equicontinuous, and $\tau_T\Psi(x)$ is a non-isolated periodic point in $S_T$, because any perturbation of $x$ will produce a neighbor of $\tau_T\Psi(x)$ in $S_T$.
Periodic points for $T$ generate isolated periodic points in $S_T$ because, once the system falls in the periodic behavior, its future is fixed.

Preperiodicity in $T$ also implies equicontinuity in $T_H$, but $T_H$ may have other equicontinuous points.
The previously mentioned machine which always go to the left produces equicontinuous points for $T_H$ which are not equicontinuous nor preperiodic for $T_T$.

The following proposition states that the equicontinuity of preperiodic configurations is transmitted to factor subshifts of $T_H$, which will be helpful in the sequel.
\begin{prop}\label{p:shper}
 If $z\in S_H$ is a preperiodic word involving the machine head infinitely often, then it is isolated.
\end{prop}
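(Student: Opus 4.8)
The plan is to reduce the statement to a boundedness property of the head, and then to a forward-determinism reconstruction argument. First I would translate the two hypotheses on $z=\tau_H(x)$. Writing $V=\set t\Nset{z_t\in A\times Q}$ for the set of times at which the head sits on cell $0$, ``involving the head infinitely often'' means that $V$ is infinite, while preperiodicity (transient $q$, period $p$) means that $V\cap\ci q$ is periodic of period $p$. An infinite periodic subset of $\Nset$ has all its gaps bounded by $p$, so between two consecutive visits to cell $0$ the head performs an excursion of length at most $p$; since the head moves by one cell per step, its position stays within $\cc{-p}p$ for all large $t$, hence in a fixed finite window $\cc{-B}B$. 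As the cells outside this window are then never read nor written, the tail of the orbit of $x$ lives in the finite set of configurations whose head lies in $\cc{-B}B$ and whose tape agrees with that of $x$ outside the window; being the orbit of a deterministic map on a finite set, it is eventually periodic in $X_H$. Thus some iterate $x^{*}=T_H^{t_1}(x)$ is a periodic configuration, of period $p'$ say, whose head repeatedly sweeps the window and returns to cell $0$.

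Recall that $z$ is \emph{isolated} exactly when some prefix $z\sco0N$ is realised by no other point of $S_H$. So, second, I would fix such a length $N$ and show that any $z'=\tau_H(x')$ with $z'\sco0N=z\sco0N$ must coincide with $z$. The engine is the determinism of $\delta$: whenever the head is on cell $0$ the letter $z_t\in A\times Q$ reveals both the symbol read and the state, hence, through $\delta$, the symbol written and the direction taken; and whenever the head is away, $z_t\in A$ still reveals the current content of cell $0$. The idea is then to reconstruct, step by step, the entire windowed evolution of $x^{*}$ from the finitely many position-$0$ observations occurring during one period, using the periodic returns to cell $0$ as checkpoints that resynchronise the simulation.

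The main obstacle is precisely that the head also reads cells $\ne0$ whose content is never directly displayed at cell $0$, so a naive forward simulation is underdetermined: two tapes differing at such a hidden cell could a priori yield the same prefix of $z$. I would resolve this by a dichotomy carried out along each excursion away from cell $0$. If altering a hidden cell changes the action eventually taken on it, then, because the head is confined and returns to cell $0$ within at most $p$ steps, this discrepancy reaches cell $0$ as a different state or a different displayed symbol within one period, hence is seen in $z'\sco0N$ as soon as $N>t_1+p'$, contradicting the assumed agreement. If, on the contrary, the alteration never changes any action at any visited cell, then by periodicity of $x^{*}$ it leaves the content of cell $0$ unchanged at every later time, so it produces exactly the same trace and creates no new point of $S_H$. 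Either way, no $z'\ne z$ can share a prefix of length $N:=t_1+p'+1$ with $z$, which is the desired isolation. The delicate part to write out carefully is this synchronisation over one full period: one must check that each hidden read is either pinned down by the observed return-state and return-symbol or else dynamically irrelevant, and that the finite transient $\co0{t_1}$, being itself determined by the prefix, introduces no extra freedom.
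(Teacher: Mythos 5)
Your first step (bounded gaps between the returns of the head to cell $0$ confine the head of a preimage of $z$ to a finite window, and pigeonhole on the windowed configurations then gives eventual periodicity) is sound and matches the opening of the paper's proof. The gap is in the second step. Isolation requires a prefix length $N$ such that \emph{every} $z'=\tau_H(x')\in S_H$ with $z'\sco0N=z\sco0N$ equals $z$, so $N$ must be calibrated to control an arbitrary preimage $x'$, not the particular preimage $x$ of $z$; your choice $N=t_1+p'+1$ only reflects the transient and period of $x$ itself. The agreement of traces constrains the head of $x'$ to return to cell $0$ every at most $p$ steps only \emph{during the observed prefix}, and it reveals nothing about the cells other than $0$ that the head of $x'$ reads during its excursions. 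A cell of $x'$ that is first read after time $N$ is completely unconstrained by the hypothesis and can send the head out of the window for good, producing $z'\ne z$. Your dichotomy does not close this: its first branch asserts that a behavioural discrepancy \xpr{reaches cell $0$ within one period, hence is seen in the prefix}, but that holds only if the discrepant cell is read before time $N$, which is exactly what may fail; moreover nothing forces the excursions of $x'$ to visit the same cells as those of $x$, so the perturbation framing (\xpr{altering a hidden cell of $x$}) does not cover all $x'$ with a matching prefix.

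The paper avoids this by running the pigeonhole on $x'$ rather than on $x$: with a prefix of length $\card Q\card A^{p+1}(p+1)^2$, the head of $x'$ is confined to $\cc{\ipart{-p/2}}{\ipart{p/2}}$ long enough that two of its windowed configurations coincide, at times $i<j\le\card Q\card A^{p+1}(p+1)$ inside the prefix; this makes $\sigma^i(z')$ genuinely $(j-i)$-periodic forever, since from time $j$ on the evolution repeats and no cell outside the window is ever read. Then $\sigma^i(z)$ and $\sigma^i(z')$ are both $(j-i)p$-periodic and agree on their first $(j-i)p$ letters, hence are equal, and $z'=z$. If you want to keep your reconstruction idea, you must at least enlarge $N$ to a bound of this pigeonhole type so that the periodic regime of $x'$ is already forced inside the observed prefix; once that is done, the determinism argument becomes unnecessary and the purely combinatorial comparison of two periodic words finishes the proof.
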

\begin{proof}
 We can assume that $z$ is periodic, and then include the transient evolution in a larger ball.
 Let $p\in\Nset\setminus\{0\}$ be the period of $z$; let us prove that the ball $U=[z\scc0{\card Q\card A^{p+1}(p+1)^2}]_0$ of $S_H$ is equal to $\{z\}$.
 Let $z'\in U$ and $x\in\tau_H^{-1}(z')$. It can be seen that the head computing over $z'$ always remains between the positions $\ipart{-p/2}$ and $\ipart{p/2}$, which correspond to at most $\card Q\card A^{p+1}(p+1)$ distinct finite patterns. Hence there are $i<j\le\card Q\card A^{p+1}(p+1)$ such that $T^i(x)=T^j(x)$; as a consequence $\sigma^i(z')$ is $(j-i)$-periodic. Together with $\sigma^i(z)$, they are both $(j-i)p$-periodic and coincide on their first $(j-i)p$ letters, since $(j-i)p\le\card Q\card A^{p+1}(p+1)^2-i$. As a conclusion, $z'=z$.
\qed\end{proof} 
\subsection{Preperiodic machines}
When all the configurations are uniformly preperiodic, we say that the system is preperiodic, \ie there exist $q$, $p$ such that $T^{q+p}=T^q$.
In the present case, global preperiodicity of each of the considered systems comes directly from local preperiodicity of $T$; and it is equivalent to global equicontinuity of each of the systems as the next theorem establishes.
\begin{teo}
Considering a machine, the following statements are equivalent:
\begin{enumerate}
 \item\label{bounded} The head position is (uniformly) bounded.
\item Any configuration of $X$ (or $X_H$, $X_T$) is {\bf preperiodic}.
 \item\label{prep}$T$ (or $T_H$, $T_T$, $S_T$, $S_H$) is {\bf preperiodic}.
 \item\label{equi}$T$ (or $T_H$, $T_T$, $S_T$, $S_H$) is {\bf equicontinuous}.
 \item\label{finite} $S_T$ (or $S_H$) is {\bf finite}.
\end{enumerate}
\end{teo}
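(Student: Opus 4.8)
The plan is to treat statement~(\ref{bounded}) (uniform boundedness of the head displacement) as the hub and to exploit that Remark~\ref{r:eqpt} already supplies all the \emph{per-configuration} equivalences. What remains is genuinely a \emph{uniformity} issue: I must pass from ``every configuration has bounded head'' to ``the bound is uniform'', and symmetrically manufacture a single bad orbit out of a family of increasingly bad ones. Concretely I would prove the cycle (\ref{bounded})$\Rightarrow$(\ref{prep})$\Rightarrow$\{(2),(\ref{equi}),(\ref{finite})\} and then close it by showing that each of (2), (\ref{equi}), (\ref{finite}) forces (\ref{bounded}), every reverse implication being reduced, through one compactness lemma, to the statement: \emph{if the displacement is not uniformly bounded, then some single configuration has an infinite head orbit.}

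For that \textbf{sealing lemma} I would pick, for each $N$, a configuration whose head leaves $\cc{-N}N$ (after normalising the start position to $0$ using that $T$ commutes with $\sigma$, and passing to a subsequence on which the finite set $Q$ is constant), and extract from the compact tape space $\az$ a pointwise limit $x^\infty$. If the head of $x^\infty$ stayed inside some $\cc{-B}B$ forever, then any $x^N$ agreeing with $x^\infty$ on $\cc{-B}B$ would have its head sealed inside $\cc{-B}B$ as well --- it reads and writes exactly what $x^\infty$ does as long as it has not crossed the window, hence never crosses it --- contradicting that its head reaches $\pm N$ once $N>B$. Thus $x^\infty$ has an unbounded orbit. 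For (\ref{bounded})$\Rightarrow$(\ref{prep}) I would instead observe that under a uniform bound $N$ the dynamics only ever touches the window $\cc{i-N}{i+N}$ around the start, which ranges over at most $\card A^{2N+1}\card Q(2N+1)$ frozen-outside configurations; pigeonhole yields uniform $q,p$ with $T^{q+p}=T^q$, and surjectivity of $\Phi$, $\Psi$, $\tau_T$ and $\tau_H$ transports this identity to $T_H$, $T_T$, $S_T$ and $S_H$.

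The implication (\ref{prep})$\Rightarrow$(2) is immediate. For (\ref{prep})$\Rightarrow$(\ref{finite}) I would read $\sigma^{q+p}=\sigma^q$ on the one-sided subshift $S_T$ (resp.\ $S_H$) as saying that every admissible word is $p$-periodic from coordinate $q$ on; there are finitely many such words, so the subshift is finite. For (\ref{prep})$\Rightarrow$(\ref{equi}) I would use that $T^{q+p}=T^q$ makes $\{T^t:t\in\Nset\}$ a \emph{finite} family of continuous maps: on the compact spaces $X_H$, $X_T$, $S_T$, $S_H$ each is uniformly continuous, so the least of finitely many moduli gives global equicontinuity; the non-compact $X$ is the delicate case, where I would combine the pointwise equicontinuity granted by Remark~\ref{r:eqpt} with the uniform displacement bound, the head only ever probing the single window it occupies.

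The reverse implications all start from $\neg$(\ref{bounded}), which by the sealing lemma produces a configuration $x^*$ with infinite head orbit. By Remark~\ref{r:eqpt}, $x^*$ is then not preperiodic for $T$ and $\Phi(x^*)$ not preperiodic for $T_H$, killing (2); $\Psi(x^*)$ is not equicontinuous for $T_T$ and $\tau_T\Psi(x^*)$ not equicontinuous in $S_T$, and since equicontinuity passes to factors and to invariant subsystems this single orbit defeats (\ref{equi}) on all five spaces; finally Proposition~\ref{p:shper} turns the infinitely head-recurring non-preperiodic orbit into a non-isolated point, so $S_T$ and $S_H$ are infinite and (\ref{finite}) fails. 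The step I expect to be the real obstacle is the $X_T$ form of (2)$\Rightarrow$(\ref{bounded}): an escaping head can still give a \emph{shift-periodic}, hence preperiodic, point of $X_T$ (as for the machine that always moves left on a spatially periodic tape), so $\Psi(x^*)$ itself need not witness the failure. I would overcome this by re-running the sealing construction while forcing the tape far ahead of the escaping head to be \emph{aperiodic}: this leaves the trajectory untouched up to any finite time but prevents any shift of the orbit from ever closing up, producing a genuinely non-preperiodic configuration of $X_T$ and completing the cycle.
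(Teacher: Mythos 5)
Your route is genuinely different from the paper's. The paper makes the cellular automaton $T_H$ the hub: it equates preperiodicity of $T$, $T_H$, $T_T$ via the commutation diagrams, invokes the known CA-theory equivalence (equicontinuity of a CA $\Leftrightarrow$ eventual periodicity $\Leftrightarrow$ preperiodicity of all configurations $\Leftrightarrow$ finiteness of the column subshift) to handle $T_H$, $S_H$ and statement (2) in one stroke, and only adds the compactness of $X_T$ to upgrade the pointwise equicontinuity of Remark~\ref{r:eqpt} to global equicontinuity of $T_T$. You instead take the uniform bound (1) as the hub and make the argument self-contained: your ``sealing lemma'' (extract, by compactness of $A^\Zset\times Q$ after recentering, a single configuration with unbounded head orbit from a sequence of increasingly bad ones, and observe that a head confined to a window on the limit would be confined on any configuration agreeing with it there) replaces the citation to CA theory, and your pigeonhole on the $\card A^{2N+1}\card Q(2N+1)$ local states gives (1)$\Rightarrow$(3) with explicit uniform $q,p$. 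Your approach buys elementarity and explicit bounds; the paper's buys brevity. You are also more honest than the paper about the one real subtlety, namely that an escaping head can still yield a shift-preperiodic point of $X_T$, which the paper's sketch does not address at all.

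Two soft spots. First, your fix for the $X_T$ case is stated as making the tape ``far ahead of the escaping head'' aperiodic; but cells ahead of an escaping head are eventually reached, so this modification does \emph{not} leave the trajectory untouched and could even re-confine the head. The correct move is the mirror one: if $T_T^{q+p}=T_T^q$ on $\Psi(x)$ then the head position satisfies $i_{t+p}=i_t+D$ for $t\ge q$, so either $D=0$ (head bounded, excluded) or the head escapes in exactly one direction and the comparison of the two moving tapes on the never-visited side forces that tail of $x$ to be eventually $\abs D$-periodic; so aperiodify the side the head does \emph{not} escape to (this provably leaves the trajectory unchanged), and note that a head unbounded in both directions already contradicts $i_{t+p}=i_t+D$. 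Second, ``equicontinuity passes to factors'' does not reach $T_H$, which is not a factor of the other systems but a super-system via the embedding $\Phi$ (which, incidentally, is injective rather than surjective; its image only covers the with-head part of $X_H$, the headless configurations being fixed points). To defeat (4) for $T_H$ you need a direct argument, e.g.\ compare a configuration whose head sits at distance $N+1$ from the origin and later reaches cell $0$ with its headless twin: they agree on $\cc{-N}N$ yet their images under some $T_H^t$ differ at cell $0$. This is precisely the point the paper delegates to CA theory, so it must be supplied explicitly in your self-contained version. With these two repairs the proposal goes through.
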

\begin{proof}We give only a sketch of the main implications.
\begin{itemize}
 \item It is quite obvious from the commutation diagrams that the preperiodicity of $T$, $T_H$ and $T_T$ are equivalent, and they imply those of $S_T$ and $S_H$.
They also imply, from Remark~\ref{r:eqpt}, that the head position is bounded.
 \item Clearly, the equicontinuity of $T$ and $T_H$ are equivalent.
 \item It is known from cellular automata theory that the equicontinuity of $T_H$, its preperiodicity, that of all its configuration and the finiteness of $S_H$ are equivalent.
 \item If the head position on all configurations is bounded, then from Remark~\ref{r:eqpt} they are all equicontinuous for $T_T$. $X_T$ being compact, $T_T$ is equicontinuous.
 \item It is obvious that $S_T$ is finite if and only if the head reads a bounded part of the initial configuration.
\qed
\end{itemize}
\end{proof}
\subsection{Sofic machines}
Now we allow computations where the head can go arbitrarily \xpr{far}, but without ever making \xpr{large} movements back.
\begin{defi}
 We say that a machine makes a \emph{right-cycle} (\emph{left-cycle}) of width $N\in\Nset$ over a configuration $x\in\az\times Q\times\Zset$ and a cell $i\in\Zset$ if there exist time steps $0=t_0<t_1<t_2$ such that the head position is $i$ at time $0$ and $t_2$, and is $i+N$ ($i-N$) at time $t_1$.
\end{defi}
In this section, we consider machines whose cycles have bounded width, \ie there exists an integer $N$ such that the machine cannot make any cycle wider than $N$.
These machines have been studied in~\cite{GajaJAC,Gaja07}, where it was proved that they are exactly the machines for which $S_T$ is sofic.
\begin{teo}~\label{t:sofic}
Considering a machine, the following statements are equivalent:
\begin{enumerate}
 \item\label{sofic} $S_T$ is {\bf sofic}.
 \item\label{i:tmheq} All configurations of $X_H$ that contain the head are {\bf equicontinuous}.
\end{enumerate}
\end{teo}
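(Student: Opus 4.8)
The plan is to reduce everything to the bounded-width-cycle condition, which by the result of \cite{GajaJAC,Gaja07} recalled just above is equivalent to statement~\ref{sofic}. It then suffices to prove that the head makes only cycles of bounded width if and only if every configuration of $X_H$ containing the head is equicontinuous.

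For the direction from bounded cycles to equicontinuity (\ref{sofic} $\Rightarrow$ \ref{i:tmheq}), I would first isolate a confinement lemma: if no cycle has width more than $N$, then in the orbit of any configuration, as soon as the head reaches a position $R$ at distance more than $N$ from its starting cell, it never afterwards returns to a position below $R-N$ (and symmetrically on the left). Indeed the cell $R-N-1$ has then already been visited on the way up to $R$, so a later visit would produce a cycle of width $N+1$. As a consequence the head position is either bounded, or tends to $+\infty$, or tends to $-\infty$. In the bounded case, Remark~\ref{r:eqpt} already gives preperiodicity, hence equicontinuity of $\Phi(x)$ for $T_H$. In the drifting case, say to the right, the confinement lemma shows that there is a last time $t^*$ at which the head lies in a given central window $\cc{-n}n$, that up to $t^*$ the head only reads cells of a bounded window $\cc{-M}M$, and that after $t^*$ the window $\cc{-n}n$ is frozen forever. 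Any configuration coinciding with $\xi$ on $\cc{-M}M$ (with $M$ depending on $\xi$, which is enough for pointwise equicontinuity) therefore reproduces the same computation on $\cc{-n}n$ up to $t^*$; moreover its own head, obeying the same bounded-cycle machine, also drifts away and never comes back to $\cc{-n}n$, so the two orbits agree on $\cc{-n}n$ at all times. This yields equicontinuity of $\xi$ in $X_H$.

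For the converse (\ref{i:tmheq} $\Rightarrow$ \ref{sofic}), I would argue by contraposition: assuming unbounded cycles (equivalently, $S_T$ non-sofic), I want to exhibit a single configuration of $X_H$, containing the head, that is \emph{not} equicontinuous. The guiding observation is that a configuration $\xi$ fails to be equicontinuous exactly when its head performs round trips of unbounded width over a fixed central window, \ie reaches beyond every $M$ and comes back to $\cc{-n}n$, so that a perturbation placed at a far turning cell is carried back to the center. The plan is to build such a $\xi$ from the unbounded cycles, using the finiteness of $Q$ to force the returns, and then to choose, for each width, a perturbation at the turning cell that makes the two computations diverge on $\cc{-n}n$ (for instance one that prevents the perturbed head from turning back, so that it leaves the center frozen while the original head returns and modifies it).

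I expect this last construction to be the main obstacle. The difficulty is that ``reaching far and returning'' is not a closed condition: a naive compactness argument on configurations admitting ever-wider cycles collapses to a head that drifts away without ever returning, which is equicontinuous and thus useless. Forcing the round trips to survive in a single configuration requires a pumping argument on the crossing sequences of the head at a boundary near the center---precisely the kind of analysis underlying the sofic characterization of \cite{Gaja07}---so that the head recurrently re-enters the central window after excursions of growing width. A secondary but genuine technical point is to guarantee that the perturbation at the turning cell is not absorbed by the returning computation; here the freedom provided by arbitrarily many excursions of increasing width should allow one to select, for infinitely many widths, a perturbation that genuinely alters the state brought back to $\cc{-n}n$, which is all that non-equicontinuity of $\xi$ requires.
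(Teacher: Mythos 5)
Your forward direction (\ref{sofic} $\Rightarrow$ \ref{i:tmheq}) is essentially the paper's argument: bounded cycle width gives the confinement property (once the head has moved $N$ past a window it can never re-enter it without creating a too-wide cycle), and this is applied both to the reference configuration and to its perturbation. That part is fine.

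The converse is where you go wrong, and the error is located exactly at the step you flag as ``the main obstacle.'' You assert that a configuration of $X_H$ fails to be equicontinuous \emph{exactly} when its own head performs unboundedly wide round trips over a fixed central window, and you therefore dismiss the compactness argument on the grounds that its limit point --- a configuration $c$ on which the head drifts away and never returns --- ``is equicontinuous and thus useless.'' Both claims are false, and the second one is precisely backwards: that drifting limit $c$ is the non-equicontinuous witness the paper uses. Equicontinuity at $c$ is a statement about \emph{nearby} configurations, not about the orbit of $c$ alone. Taking $c^j\in[u^j]_0$ with $u^j\in(A\times Q)A^{n_j}$, $n_j>j$, on which the head starts at $0$, sweeps $\cc0{n_j}$ and returns to $0$, and extracting an adhering value $c$, one gets: for every $N$ there is $j$ with $c^j\scc{-N}N=c\scc{-N}N$, yet at some time $t$ the head of $T_H^t(c^j)$ sits at cell $0$ while the head of $T_H^t(c)$ has left forever. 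The orbits of $c$ and $c^j$ thus differ at cell $0$, so $c$ is not equicontinuous --- no perturbation needs to be ``carried back to the center,'' and no pumping on crossing sequences is needed. (Your intuition that drifting configurations are equicontinuous is only valid \emph{under} the bounded-cycle hypothesis, i.e.\ in the forward direction; importing it into the contrapositive is circular.)

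Beyond being unnecessary, the construction you propose instead --- a single configuration whose head makes unboundedly wide returns to a fixed window, plus a perturbation at each turning cell --- need not exist: a machine can make arbitrarily wide cycles while every individual configuration supports only one cycle, of a width dictated by that configuration (so $S_T$ is non-sofic but no configuration does unbounded round trips). Your plan would then have nothing to build on, whereas the compactness argument still applies. So the converse as you outline it is not a harder route to the same result; it is a route that can dead-end.
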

\begin{proof}~\begin{itemize}
 \item[\ref{sofic}$\Rightarrow$\ref{i:tmheq}]
 From~\cite{Gaja07}, we know that there exists an integer $N$ such that the machine cannot make any cycle wider than $N\in\Nset$, and let $x\in X_H$ a configuration containing the head within $\cc{-k}k$, for some $k\in\Nset$.
 Let us show that if $y\in[x\scc{-k-N}{k+N}]$, then for every $t\in\Nset$ we have $T_H^t(y)\in[T_H^t(x)\scc{-k}k]$.
 Let us remark that while the head is inside $\cc{-k-N}{k+N}$, we necessarily have $T_H^t(y)\in[T_H^t(x)\scc{-k}k]$.
 Let us suppose that there exists $j\in\Nset$ such that the head is outside $\cc{-k-N}{k+N}$ at time $j$ and let us take this $j$ minimal.
 Then the heads of $T_H^j(x)$ and $T_H^j(y)$ are outside $\cc{-k-N}{k+N}$.
 At some moment, the head has gone from $k$ to $k+N$ (or from $-k-N$); if it comes back to $\cc{-k}k$, it would make a cycle.
 Therefore, the head cannot come back to $\cc{-k}{k}$, and this is true both for $x$ and $y$, and we have the result.
 \item[\ref{i:tmheq}$\Rightarrow$\ref{sofic}]
  Conversely, assume that the head can do arbitrarily wide right-cycles in cell $0$, \ie for each $j\in\Nset$ there exists a cylinder $[u^j]_0$ of $X_H$ with $u^j\in (A\times Q)A^{n_j}$, with $n_j>j$, such that over each configuration of $[u^j]_0$, the head starts at $0$, it visits the whole interval $\cc0{n_j}$ and comes back to cell $0$.
 Let us take some configuration $c^j$ in each cylinder $[u^j]_0$.
 By compactness, the sequence $(c^j)_{j\in\Nset}$ admits an adhering value $c$, on which the head necessarily goes infinitely far to the right without ever coming back to cell $0$.
 By construction, for any $N$, there is some $j\in\Nset$ such that the configuration $c^j\scc{-N}N=c\scc{-N}N$.
 But there exists a time $t\in\Nset$ such that $T_H^t(c^j)$ has the head in cell $0$, whilst $T_H^t(c)$ has not; hence $c$ is not equicontinuous.
 \qed
\end{itemize}
\end{proof}
From \cite{GajaJAC}, any of the former properties implies that any configuration is either preperiodic or gives rise to a movement of the head arbitrarily far in some direction, but the converse is not true.
Any configuration of $S_H$ is preperiodic, hence this subshift is numerable.
%
%
\section{Bounded-zigzag machines}
Whilst the sofic machines did not allow any large cycle, we can wonder what happens when allowing a single one, or a finite number of these.
The first step is to allow one cycle of arbitrary width but to forbid two overlapped unbounded cycles (zigzags).
We remark that two independent cycles, each on a different direction, are allowed in this case.
\begin{defi}
 We say that a machine makes a \emph{right-zigzag} (resp., \emph{left-zigzag}) of width $N\in\Nset$ over a configuration $x\in\az\times Q\times\Zset$ and a cell $i\in\Zset$, if there exist time steps $0=t_0<t_1<t_{2}$ such that the machine position is $i$ at times $t_{0}$ and $t_2$, and $i+N$ (resp., $i-N$) at time $t_{1}$.
We say that a machine is \emph{bounded-zigzag} if the maximal width of the zigzags that it can make is finite.
\end{defi}
\subsection{Complexity of $S_T$}
While bounded cycle machines have a sofic shift $S_T$, the bounded-zigzag machines have a subshift recognized by a deterministic pushdown automata.
The words of $S_T$ contain information about the tape symbols and the head state.
From this data, it is possible to deduce the tape symbol of the visited cells and the relative position of the head at each time step.
In order to recognize $S_T$, we can register this information and check its coherence at each time step.
When the width of the cycles is bounded, we only need to register a finite part of the tape (bounded-cycle machines have a subshift that can be recognized by a finite state automaton).

When only one ``wide'' cycle can be done, we can register the information in a stack, from which it can be read exactly once (and is lost forever once read).
This corresponds to the fact that the cells registered in the stack cannot be visited any more and zigzags cannot be allowed. The complete proof can be found in the appendix.
\begin{teo}\label{t:zigstack}
A machine $T$ is bounded-zigzag if and only if $S_T$ is recognized by some deterministic pushdown automaton.
\end{teo}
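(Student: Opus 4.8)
The plan is to first recast the membership problem for $\lang(S_T)$ in purely combinatorial terms. Given a candidate word $w=(q_0,a_0)(q_1,a_1)\cdots(q_{n-1},a_{n-1})$ over $Q\times A$, the rule $\delta$ forces both the state sequence (each $q_{t+1}$ must be the state component of $\delta(a_t,q_t)$) and the movement directions $d_t\in\{-1,1\}$, hence the relative head positions $p_t=\sum_{s<t}d_s$ (with $p_0=0$). Thus $w\in\lang(S_T)$ if and only if these forced transitions hold and the tape is \emph{consistent}: whenever the head returns to a cell $c$ (\ie $p_t=c=p_{t'}$ with no intermediate visit), the read letter $a_{t'}$ equals the letter last written at $c$, namely the letter component of $\delta(a_t,q_t)$; the first visit to each cell reads a free symbol. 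Recognizing $S_T$ therefore reduces to checking this consistency, whose only nontrivial cost is remembering the contents of visited cells that may still be re-read. Since $\lang(S_T)$ is factorial, the automaton may take all states as terminal and simply reject on the first inconsistency.

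For the implication \emph{bounded-zigzag} $\Rightarrow$ DPDA, I would exploit the geometry of the trajectory. Because the head moves by $\pm1$, the set of visited cells is always an interval $\cc{m_t}{M_t}$ with the head at $p_t$; write $\ell_t=p_t-m_t$ and $r_t=M_t-p_t$ for the two \emph{overhangs}. A rightward step pushes one cell onto the left overhang and consumes the front of the right overhang, a leftward step does the symmetric thing: this is exactly the two-stack (Turing) discipline, so in general two stacks are needed. The point is that the bounded-zigzag hypothesis, with bound $N$, guarantees that a cell lying at distance more than $N$ in one overhang is never re-read once the head starts consuming the other overhang, for doing so would exhibit a zigzag of width exceeding $N$. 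Hence a single genuine stack, storing the deep overhang from the frontier inward, together with the $O(N)$ symbols kept in the finite control within distance $N$ of the head on each side (and the sign of the current excursion), suffices. I would build the DPDA maintaining this invariant: on each input letter it checks the forced transition, and on a re-read it compares the read symbol either with the stack top (popping it) or with the near-head window, rejecting on any mismatch.

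For the converse \emph{DPDA} $\Rightarrow$ \emph{bounded-zigzag} I would argue by contraposition. If the machine is not bounded-zigzag, then for every $N$ there is a configuration and a cell around which the head performs a zigzag of width $>N$, so the trajectory word encodes a region that is read \xpr{forwards, then backwards, then forwards again}, with the re-read symbols forced by $\delta$. I would exhibit an explicit family of such words, intersect $\lang(S_T)$ with a suitable regular language fixing the relevant states and symbols to isolate this behaviour, and show that the resulting sublanguage violates a pumping property shared by all deterministic context-free languages — for instance by proving it is not context-free at all via the pumping lemma (the matching constraints are cross-serial rather than nested, as in $a^nb^nc^n$), or, failing that, by using closure of DCFLs under complement together with a pumping argument. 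Since every DPDA language is context-free, this contradicts recognizability by any DPDA.

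The main obstacle is the forward construction: proving rigorously that a single stack plus a width-$N$ control window is enough. The delicate points are handling the instant when the \xpr{active} overhang switches sides (which, by bounded-zigzag, can only happen once the current deep overhang has been drained to within $N$ of the head, so both overhangs are then bounded and can be reconciled through the finite control), and checking that no cell discarded beyond distance $N$ is ever re-read — this is precisely where the zigzag bound is used, and getting the constants and the deterministic bookkeeping exactly right is the appendix-level computation.
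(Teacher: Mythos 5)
Your forward direction is essentially the paper's construction: one stack holding the far side of the visited interval, a width-$N$ window in the finite control for the near side, and the zigzag bound justifying that a cell dropped from the window is never re-read. The paper organizes this as a bounded concatenation of phase languages $\mathcal C$, $\mathcal R$, $\mathcal L$ (central window, right excursion, left excursion) with a finite union over the possible phase patterns, rather than a single automaton maintaining an invariant, but the idea is the same and your outline, with the bookkeeping you defer, matches it.

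The converse is where there is a genuine gap. You propose to exhibit an explicit family of zigzag words, isolate it by intersecting with a regular language, and refute context-freeness (or DCFL-ness) by pumping. Three obstacles. First, you do not control the configurations on which wide zigzags occur: the machine may perform them only over tape contents you cannot describe, and the duration of each pass over the zigzagged region is not simply its width, so there is no explicit family to write down and no evident regular filter that pins one. Second, and more fundamentally, $\lang(S_T)$ is factorial and contains every trace consistent with \emph{some} configuration; when you pump a zigzag word, the pumped variants are typically still traces of other configurations, so membership is not violated and no contradiction follows. To block this you must be able to tell, from the automaton, which input positions correspond to first visits of a cell (where the read symbol is free) versus re-visits (where it is forced by $\delta$); this is exactly where determinism enters -- in the paper, a vertex of the DPDA graph has out-degree greater than one if and only if the corresponding cell is being visited for the first time. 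Third, the statement concerns deterministic PDAs specifically, and the paper's argument is accordingly a minimality argument rather than a non-context-freeness argument: it takes a shortest witness of a $1$-zigzag of width $p+3$, applies Ogden's lemma with distinguished positions inside the return segment between the last visit to cell $N$ and the first visit to cell $0$, and uses the out-degree characterization to show that the pumped-down path still encodes a $1$-zigzag -- a strictly shorter one, contradicting minimality. Your plan contains neither the minimality device nor a mechanism for identifying first visits, and these are the crux of this direction; without them the pumping argument has nothing to contradict.
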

\subsection{Complexity of $S_H$}
If we now adopt a point of view fixed on the tape --$S_H$-- rather than the head, a cycle in the subshift corresponds to a waiting time during which cell $0$ does not change.
We can adapt the previously built DPDA so that it recognizes exactly these waiting words between two visits of the head.
The key point here is that these languages are unary, and unary context-free languages are regular (see for example~\cite{Gins62}), and thus they can be recognized with a finite automaton.

When the machine is bounded-zigzag, the head can make at most one long cycle by side.
The rest of the time, the head is either moving closer to or farther from cell $0$, or staying in some finite window around cell $0$.
All of these behaviors can be recognized by a finite automaton, thus the language of $S_H$ is regular.
Therefore, we obtain a surprising reduction in language complexity when changing the point of view: if $S_T$ is recognized by some DPDA, then $S_H$ is sofic. The complete proof can be found in the appendix.
Note that, up to a rescaling of the tape alphabet, all factor subshifts can be reduced to the case of $S_H$.
\begin{teo}\label{t:zigsof}
 For any bounded-zigzag machine, all the factor subshifts of $T_H$ are sofic.
\end{teo}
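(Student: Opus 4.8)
The plan is to reduce the statement to the single subshift $S_H$ and then to build a finite automaton for its language, the key point being that the ``waiting runs'' one has to validate are \emph{unary}, so that any context-free constraint on them is in fact regular.

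First I would reduce to $S_H$. By the remark on factor subshifts, every factor subshift of $T_H$ has the form $(\mathcal P(T^j_H(x)))_{j\in\Nset}$ for a finite clopen partition $\mathcal P$ of the compact space $X_H$; such a partition depends only on a finite window $\cc{-r}r$ of coordinates. Using the rescaling mentioned just before the statement—which changes zigzag widths only up to the constant factor $2r+1$, hence preserves the bounded-zigzag property—this factor subshift becomes a one-block factor of the subshift $S_H$ of a bounded-zigzag machine on the rescaled alphabet. Since a factor of a sofic subshift is sofic, it suffices to prove that $S_H$ is sofic whenever $T$ is bounded-zigzag.

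Next I would describe the shape of the words of $S_H$ and separate the ``easy'' from the ``hard'' constraints. A word $z\in S_H$ is a time sequence of cell-$0$ contents: symbols in $A\times Q$ (the head sits on cell $0$) separated by maximal \emph{waiting runs} $b^m$ with $b\in A$ (the head is absent and cell $0$ is frozen). At a visit reading $a$ in state $q$, the rule $\delta(a,q)=(a',q',d)$ fixes the symbol $a'$ written, hence the letter of the following waiting run, and the side $d$ on which the head leaves; the next visit must read that same $a'$. All of this ``visit-level'' bookkeeping is finite-state. The only genuinely infinite information is, for each waiting run, which pairs (duration $m$, state in which the head comes back) are realizable by some excursion on one side, over some tape content that $S_H$ does not see.

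Then comes the heart of the argument. For a fixed entering state and side and a fixed returning state, let $W\subseteq\Nset$ be the set of admissible excursion durations. Encoding a duration $m$ by the unary word of length $m$, I would argue that $W$ is context-free: an excursion to the right is exactly a pushdown computation, the stack recording the portion of tape the head currently works on (this is the content of the DPDA of Theorem~\ref{t:zigstack}), and $m$ is the length of the corresponding path. By \cite{Gins62}, a context-free language over one letter is regular, so each such $W$ is eventually periodic and recognizable by a finite automaton. The bounded-zigzag hypothesis is what keeps the global picture finite: the head can perform at most one wide cycle per side, so its trajectory splits into finitely many phases—confined to a finite window, performing its single wide right- (or left-) cycle, or receding monotonically—each of which is finite-state, and within which the waiting runs are validated against the finitely many eventually-periodic sets $W$. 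Assembling the finite visit-consistency automaton, the finite phase structure, and these finitely many regular unary constraints yields a finite automaton recognizing $\lang(S_H)$, so $S_H$ is sofic.

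The main obstacle is this last step: proving that the duration sets $W$ are genuinely context-free (not worse) and, above all, gluing the per-excursion analysis to the global phase decomposition so that the resulting automaton recognizes $\lang(S_H)$ exactly, with no spurious words. Here the bounded-zigzag condition must be used sharply—``at most one wide cycle per side''—since several interleaved wide cycles would create a matched-nesting (properly context-free, non-regular) structure in the waiting runs, and it is precisely their being unary, together with the phases being finite in number, that collapses the complexity back to regular.
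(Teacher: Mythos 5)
Your proposal is correct and follows essentially the same route as the paper: reduce to $S_H$ by rescaling, decompose the head's trajectory into the finitely many phases permitted by the bounded-zigzag hypothesis, and observe that each waiting run is the unary image of one of the DPDA languages from Theorem~\ref{t:zigstack}, hence context-free over one letter and therefore regular by~\cite{Gins62}. The paper carries out the gluing you flag as the main obstacle exactly as you sketch, by writing $\lang(S_H)$ as an explicit finite union of concatenations of these regular languages (adding only the languages $\mathcal B_u$ for the initial segment before the head first reaches cell $0$).
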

The converse of this theorem is false: we can construct a machine with a tape with $n$ levels, where the head vertically shifts down the content of each level while moving right.
It rebounds when it finds a wall in the lowest level (which is erased in the same way), and does the same in the opposite direction.
We can see that the machine can make arbitrarily wide $n$-zigzags, each of independent length, in such a way that the factor subshifts of $T_H$ are sofic.

Nevertheless, we can prove that this kind of construction is possible only with a bounded $n$.
Let us introduce this formally.
\begin{defi}
 We say that a machine makes an \emph{$n$-cycle} of width $N\in\Nset$ over configuration $x\in\az\times Q\times\Zset$ and cell $i\in\Zset$, if there exist $2n+1$ time steps $0=t_0<t_1<\ldots<t_{2n}$ such that the head is in position $i$ at time $t_{2q}$ and outside $\cc{-N}N$ at time $t_{2q+1}$, for each $q\in\cc0n$.
We say that the machine is $n$-bounded-cycle if there is some $N$ such that the head cannot make $n$-cycles of width larger than $N$.
\end{defi}
When $S_T$ is sofic, the machine is $1$-bounded cycle.
Considering some machine $T$, we denote $\psi_N(x)\in\Nset\sqcup\{+\infty\}$ the maximum $n$ such that the machine can make an $n$-cycle of width $N$ over configuration $x$.
Clearly, $T$ is $n$-bounded cycle if and only if for some $N\in\Nset$, $\psi_N$ is bounded by $n-1$.

Let us call $\Phi_i(x)$ the set of time steps for which the head has position $i\in\Zset$ when computing over configuration $x$.
This set is linked to cycles by the following intuitive observation.
\begin{prop}\label{p:zzbound}
If $T$ is an $n$-bounded-cycle machine, then there exists $p\in\Nset$ such that for any cell $i\in\Zset$ and any non-preperiodic configuration $x\in X$, $\card{\Phi_i(x)}\le p$.
\end{prop}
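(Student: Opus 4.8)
The plan is to combine a pigeonhole count of the local configurations inside a fixed window with the $n$-bounded-cycle hypothesis, using non-preperiodicity to forbid the head from looping inside a bounded region. First I would fix $N\in\Nset$ with $\psi_N\le n-1$, so that no configuration admits an $n$-cycle of width $N$. Since $\sigma T=T\sigma$ and $\sigma$ is a bijection preserving (non-)preperiodicity, it suffices to bound $\card{\Phi_0(x)}$ over all non-preperiodic $x$, applying the result to $\sigma^i(x)$ afterwards to recover an arbitrary cell $i$; after this reduction the window $\cc{-N}N$ is centered at the very cell whose visits we count. Let $M=\card Q\,(2N+1)\,\card A^{2N+1}$ be the number of distinct \emph{local configurations} (head state, head position within the window, and tape content of $\cc{-N}N$).

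Next I would organize the visit times $\Phi_0(x)$ into \emph{blocks}: two consecutive visits lie in the same block when the head never leaves $\cc{-N}N$ between them, and distinct blocks are separated by at least one time step at which the head is outside $\cc{-N}N$. The number of blocks is at most $n$: otherwise, selecting the first visit of each of $n+1$ consecutive blocks yields times $u_0<\dots<u_n$ at position $0$ with an out-of-window time step between each $u_q$ and $u_{q+1}$; reading this configuration pattern off $T^{u_0}(x)$ (so that the first visit occurs at local time $0$) produces an $n$-cycle of width $N$, contradicting the choice of $N$.

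Then I would bound the number of visits within a single block by $M$. Throughout the time span of a block the head stays in $\cc{-N}N$, so the tape outside the window is frozen and the evolution is governed by a fixed windowed transition on local configurations. If some local configuration repeated within this span, that transition would cycle and the head would remain confined to $\cc{-N}N$ forever; by Remark~\ref{r:eqpt} this forces $x$ to be preperiodic, which is excluded. Hence all local configurations occurring in the span are distinct, the span contains at most $M$ time steps, and therefore at most $M$ visits to $0$. Combining the two bounds gives $\card{\Phi_0(x)}\le nM$, so $p=nM$ works, and the same dichotomy (too many blocks, or one infinite block) shows $\Phi_0(x)$ cannot be infinite.

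The main obstacle is the confinement argument of the last step: one must check carefully that a repetition of the local configuration inside a window-confined stretch propagates forever --- that is, once the windowed transition returns to a previously seen local configuration while the exterior is still frozen, it necessarily keeps the head inside and loops --- so that Remark~\ref{r:eqpt} can legitimately be invoked against non-preperiodicity. A secondary point needing care is the reduction to $i=0$, where one must ensure the symmetric window and the cycle condition transport correctly under $\sigma^i$.
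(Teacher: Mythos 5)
Your proof is correct and follows essentially the same route as the paper's: decompose the visit times into maximal runs separated by excursions outside $\cc{-N}N$, bound the number of runs by $n$ using the $n$-bounded-cycle hypothesis, and bound the length of each run by counting local configurations, since a repeated local configuration with frozen exterior forces $T^s(x)=T^{s'}(x)$ and hence preperiodicity. The only cosmetic differences are that you bound each block separately where the paper pigeonholes on the total count, and your "main obstacle" dissolves immediately because a repeated local configuration inside a confined stretch already gives equality of the full configurations, so preperiodicity follows by definition without needing Remark~\ref{r:eqpt}.
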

\begin{proof}
Let $n,N\in\Nset$ be such that $max\sett{\psi_N(x)}{x\in X}=n-1$, and $x\in X$ such that $\card{\Phi_0(x)}>p=2n\card A^{2N+1}$ -- the case $i\ne0$ can be obtained by shifting.
Consider $\{t_0,\ldots,t_p\}\subset\Phi_0(x)$ with $t_0<t_1<\ldots<t_p$.
If we consider an $(n-1)$-cycle over $x$ in cell $0$, we can see that there exist $t_{k_1}<t_{k_2}<\ldots<t_{k_{n-1)}}$ such that for any $i\in\cc1{n-1}$, the head goes beyond $N$ or $-N$ between time steps $t_{k_i}$ and $t_{k_i+1}$, but not between (possibly equal) times $t_{k_i+1}$ and $t_{k_{i+1}}$.
This means that $t_{k_i}$ is the last time that the head is in $0$ before going beyond $\cc{-N}N$.
Let $k_0=-1$ and $k_{n}=p$, in such a way that $\cc0p=\bigcup_{i=0}^{n}I_i$, where $I_i=\cc{k_i+1}{k_{i+1}}$ for $0\le i\le n$. 
There are $n+1$ such intervals, so one of them, say $I_i$, has at least $\card A^{2N+1}$ elements; this is all the more the case for $\cc{t_{k_i+1}}{t_{k_{i+1}}}\supset\sett{t_{k_j}}{k_i<j\le k_{i+1}}$.
Hence, between time steps $t_{k_i+1}$ and $t_{k_{i+1}}$ there are at least $\card A^{2N+1}$ consecutive time steps in $\Phi_0(x)$ such that the head stays within the interval of cells $\cc{-N}N$.
As a result, there are $i,j\in\cc{t_{k_i+1}}{t_{k_{i+1}}}$ with $i<j$ and $T^i(x)=T^j(x)$, which implies that $x$ is preperiodic.
\qed\end{proof}
%
%
\begin{teo}
If $S_H$ is sofic, then $T$ is $n$-bounded-cycle for some $n$.
\end{teo}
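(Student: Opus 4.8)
The plan is to prove the contrapositive: assuming $T$ is not $n$-bounded-cycle for any $n$, I would show that $\lang(S_H)$ is not regular, so that $S_H$ is not sofic. Unfolding the quantifiers on $\psi_N$, the hypothesis means exactly that for \emph{every} window width $N$ the supremum of $\psi_N$ is infinite, i.e.\ there are configurations over which the head performs arbitrarily many excursions that leave $\cc{-N}N$ and come back to cell $0$. So for each $N$ and each $m$ I can fix a configuration $x$ realizing an $m$-cycle of width $N$ over cell $0$ (a general cell $i$ reduces to this by shifting, as in Proposition~\ref{p:zzbound}).

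The first step is to read these excursions inside the word $\tau_H(x)\in S_H$. While the head is away from cell $0$ the symbol there is frozen, so every wide excursion contributes a long constant \emph{waiting block}, and the returns to cell $0$ are the head-letters separating consecutive blocks; an $m$-cycle thus yields a factor of the form $h_0\,a_0^{g_0}\,h_1\,a_1^{g_1}\cdots h_m$ with $m$ gaps $g_j$, each of order at least $N$. A single free gap is harmless --- it is precisely the unary, hence regular, behaviour exploited in Theorem~\ref{t:zigsof} --- so the failure of regularity must come from a forced relation \emph{between} gaps.

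The heart of the argument is to manufacture such a relation. Because the head must turn around to come back, and because it has finitely many states and reads only a bounded window when it turns, a pigeonhole over the finitely many pairs (state, local contents) at the turning points shows that, once $m$ is large enough, two excursions start from the \xpr{same turning configuration}. By determinism the intervening behaviour then repeats, leaving two possibilities. Either the orbit becomes preperiodic, in which case $\tau_H(x)$ is an eventually periodic word meeting the head infinitely often and hence isolated by Proposition~\ref{p:shper}; letting $N\to\infty$ produces isolated periodic points of unbounded period, and such a family cannot sit inside a sofic subshift, since their languages already force an unbounded matching of gap lengths. Or the repetition forces two waiting blocks $a^{g}$ of the \emph{same}, unbounded length $g$ to appear in a word of $\lang(S_H)$, whereas lengthening one of them breaks the return and takes the word out of $\lang(S_H)$. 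In either case $\lang(S_H)$ violates the pumping lemma and is not regular.

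The step I expect to be the main obstacle is exactly this forcing of a matching constraint. The excursions rewrite the cells they cross, so \xpr{same turning configuration} cannot be read naively off the tape; it should instead be established at the boundaries $\pm N$ by a crossing-sequence argument, using that the computation beyond $\pm N$ depends only on the outer tape and on the sequence of states in which the head crosses. One must also ensure the two matched gaps are genuinely unbounded (this is where the arbitrariness of $N$ enters) and extract explicit thresholds $n$ and $N$ from the size of a hypothetical automaton for $\lang(S_H)$; turning the pigeonhole into such uniform bounds is the delicate accounting.
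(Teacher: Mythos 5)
Your overall strategy (contrapositive plus pumping) is legitimate in principle, but the proposal has a genuine gap exactly where you flag it, and the gap is not just ``delicate accounting'': the step that would force a \emph{relation between gaps} is never established, and it is not clear it can be established along the route you sketch. A pigeonhole ``at the turning points'' over pairs (state, local contents) does not work as stated, because the data governing the return trip is the entire rewritten tape segment between cell $0$ and the turning point, which is unbounded; and the claim that ``lengthening one waiting block breaks the return'' is unjustified -- a longer gap could a priori be realized by a different configuration making a wider excursion (indeed the waiting languages are unary, hence regular, so individual gap lengths are essentially free). Worse, the paper's own example right after Theorem~\ref{t:zigsof} (the $n$-level machine) shows that a machine can make $n$ arbitrarily wide cycles \emph{of independent lengths} while keeping all factor subshifts of $T_H$ sofic; so the inter-gap correlation you need does not come from the mere existence of many wide cycles, and your pumping-lemma contradiction is not secured by either branch of your case analysis. (Your first branch's conclusion -- a sofic shift cannot contain isolated periodic points of unbounded period -- is true, but for the unique-path reason below, not because of ``matching of gap lengths''.)

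The paper's proof sidesteps all of this by pigeonholing on the states of the hypothetical \emph{automaton} rather than on machine configurations. Suppose $S_H$ is recognized by a finite automaton with $N$ states and the machine makes an $N$-cycle of width $N$ over some $x$, with return times $t_0<\dots<t_{2N}$. Among the $N+1$ automaton states reached at the return times $t_{2q}$, two coincide, say at $t_{2i}$ and $t_{2j}$; repeating that loop of the automaton yields a periodic word $z\in S_H$ which contains head letters infinitely often (each return time contributes one). Proposition~\ref{p:shper} then says $z$ is \emph{isolated}, so the looping path is the only path out of that state; hence its vertices are pairwise distinct and the loop has length $t_{2j}-t_{2i}\le N$ -- too short for the head to leave $\cc{-N}{N}$ and come back, a contradiction. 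If you want to salvage your write-up, the missing ingredient is precisely this use of Proposition~\ref{p:shper} to convert a loop in the recognizing automaton into a uniqueness (isolation) statement, which is what actually bounds the period and replaces the unproven ``matching constraint'' between gaps.
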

\begin{proof}
Assume that $S_H=\tau_H(X_H)$ is recognized by some finite automaton with $N$ states, and that there exists some configuration $x\in X$ on which the machine makes some $N$-cycle of width $N$.
Let $t_0,\ldots,t_{2N}$ be as in the definition of $N$-cycles, and $u=\tau_H(x)\sco0{t_{2N}}$.
Let $o_0\ldots o_{t_{2N}+1}$ be the corresponding path of the finite automaton.
We can see that there are $i<j<N$ such that $o_{t_{2i}}=o_{t_{2j}}$, hence there is some periodic infinite word $z\in\tau_H(X_H)$ corresponding to the path $w$ that repeats the cycle $(o_{t_{2i}}\ldots o_{t_{2j}})$.
From Proposition~\ref{p:shper}, $z$ is isolated.
As a consequence, $w$ is the only path to start from $o_{t_{2i}}$.
Therefore, its vertices are all different, and $t_{2j}-t_{2i}\le N$, but in this case the head does not have the time to go beyond $\cc{-N}N$ between these two iterations, which is a contradiction.
We have proved that $T$ is $N$-bounded-cycle.
\qed
\end{proof}
Here, too, the converse is false, since it is easy to build a machine doing a given number of arbitrarily wide rebounds on specific wall characters before stopping.
The language of such a machine cannot be regular because the time intervals between two rebounds are not independent.
\subsection{Almost equicontinuity}
We have already seen that in sofic machines, almost all configurations of $X_H$ are equicontinuous.
It is still so when allowing $n$-cycles, though in this case there are some configurations with head which are not equicontinuous -- recall that Theorem~\ref{t:sofic} is an equivalence.
\begin{teo}
 If $T$ is an $n$-bounded-cycle machine for some $n$, then $T_H$ is almost equicontinuous.
\end{teo}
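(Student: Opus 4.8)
The plan is to show that $T_H$ has a residual set of equicontinuous configurations by proving that the set of equicontinuous configurations for $T_H$ is a dense $G_\delta$. Since $X_H$ is a compact metric space, it suffices to exhibit a dense set of equicontinuous points and verify that the equicontinuous set is always a $G_\delta$ (the latter is standard for cellular automata on compact shift spaces, so I would invoke it). The natural dense set to aim for is the configurations on which the head visits only a bounded portion of the tape, together with those that do not contain the head at all.

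First I would handle the easy part of the density. Configurations without a head are fixed points of $T_H$, hence trivially equicontinuous, and they are dense among configurations whose head eventually escapes to infinity in one direction, so attention can be focused on configurations containing the head. For these, I would use Proposition~\ref{p:zzbound}: since $T$ is $n$-bounded-cycle, there is a uniform bound $p$ such that every non-preperiodic configuration has $\card{\Phi_i(x)}\le p$ for each cell $i$. The key consequence is that on a non-preperiodic configuration the head visits each cell at most $p$ times, so after finitely many steps the head permanently leaves any fixed window $\cc{-k}k$ and never returns. This is exactly the mechanism exploited in the proof of Theorem~\ref{t:sofic}: once the head abandons a central window for good, a perturbation made far outside that window cannot affect the already-decided central part, giving equicontinuity.

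Next I would make this precise. Given $\varepsilon=2^{-k}$, I want to find a cylinder neighborhood $U$ of a configuration $x$ containing the head such that any $y\in U$ stays $\varepsilon$-close to $x$ under all iterates of $T_H$. Using the bound $p$, the head on $x$ occupies cells in $\cc{-k}k$ during a bounded number of time steps, and by $n$-boundedness the widths of cycles are controlled, so there is a radius $R=R(k,p,n,N)$ such that the entire relevant computation near the center is determined by $x\scc{-R}R$; taking $U=[x\scc{-R}R]$ forces $T_H^t(y)$ and $T_H^t(x)$ to agree on $\cc{-k}k$ for all $t$. The preperiodic configurations are handled separately, as they are already equicontinuous for $T_H$ by Remark~\ref{r:eqpt}. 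Density then follows because any configuration containing the head can be approximated arbitrarily well in the central window by one whose head does only bounded excursions --- for instance by surrounding the central pattern with wall symbols that trap the head --- and such trapped configurations are equicontinuous.

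The main obstacle I expect is in the precise extraction of the radius $R$ and in confirming that a perturbation strictly outside $\cc{-R}R$ genuinely cannot propagate inward. The subtlety is that with $n$-cycles the head may leave and re-enter a window several times before finally departing, so the argument of Theorem~\ref{t:sofic} --- where a single return is forbidden --- must be replaced by a counting argument: Proposition~\ref{p:zzbound} caps the total number of visits, and the cycle width $N$ caps how far a consistency-breaking perturbation must sit, so together they bound how far out new information can lie while still influencing the center. Turning this into an explicit $R$ is the technical heart; once it is in place, equicontinuity of the trapped and preperiodic configurations and their density give a dense set of equicontinuous points, and with the $G_\delta$ property this yields almost equicontinuity of $T_H$.
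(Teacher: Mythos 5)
Your overall strategy (dense set of equicontinuous points plus the standard $G_\delta$ argument on the compact space $X_H$) matches the paper's framework, and your use of Proposition~\ref{p:zzbound} to bound the number of visits to each cell is the right ingredient. But the density step, which is the heart of the matter, has a genuine gap, and it is not the one you flagged.

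You propose to obtain density from two families: headless configurations, and configurations whose head is trapped by ``wall symbols''. Neither works for an arbitrary machine. First, a headless configuration is a fixed point of $T_H$, but a fixed point need not be equicontinuous: a perturbation far from the origin may introduce a head that marches into the central window and rewrites it (consider a machine that always moves left while modifying the tape). Second, a general Turing machine has no distinguished ``wall'' symbol; there may be no way to complete a given central pattern into a configuration on which the head stays bounded. Indeed, for the machine that always moves right, no configuration containing the head is preperiodic, yet the theorem still holds; so density cannot be derived from trapped or preperiodic configurations. Your reduction of equicontinuity to ``the head permanently leaves every window'' conflates equicontinuity for $T_H$ with preperiodicity, which the paper explicitly warns are different.

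The paper's proof replaces your density argument with an extremal one: fix a cylinder $[u]$ and a radius $k$, and (in the non-preperiodic case) choose $x\in[u]$ \emph{maximizing} $\card{\Phi_{-k}(x)\sqcup\Phi_k(x)}$, which is finite by Proposition~\ref{p:zzbound}. Taking $m$ so that $\cc{-m}m$ contains every cell visited up to time $t=\max(\Phi_{-k}(x)\sqcup\Phi_k(x))$, any $y\in[x\scc{-m}m]$ follows $x$ up to time $t$ and afterwards can never visit $\pm k$ again --- an extra visit would contradict the maximality of $x$ --- so the window $\cc{-k}k$ evolves identically for $x$ and $y$. This produces, in every cylinder, a point that blocks perturbations at scale $k$, without ever needing the head to be trapped or the excursions to be bounded. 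Note also that the radius $m$ there depends on the particular $x$, not only on $k,p,n,N$ as your sketch of a uniform $R(k,p,n,N)$ suggests; no such uniform radius is needed for the Baire argument. To repair your proof you would need to replace the wall/trapping construction with this maximality argument (or something equivalent).
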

\begin{proof}
 By compactness of the space, it is enough to prove that for any cylinder $[u]$ and any $k\in\Nset$, there exist some $x\in[u]$ and some $m\in\Nset$ such that for any $y\in[x\scc{-m}m]$ and any $t\in\Nset$, $T_H^t(y)\in[T_H^t(x)\scc{-k}k]$.
 Let $N\in\Nset$ be as in the definition of $n$-bounded-cycle machine, $[u]$ a cylinder of $X_H$ and $k\in\Nset$.
 If $[u]$ contains some preperiodic configuration with the head, then we can easily find $m$ thanks to Remark~\ref{r:eqpt}.
 Otherwise, let us consider some configuration $x\in[u]$ (with the head) maximizing $\card{\Phi_{-k}(x)\sqcup\Phi_k(x)}$, which is finite thanks to Proposition~\ref{p:zzbound}.
 Let $m\in\Zset$ be such that $m\ge k$ and the interval $\cc{-m}m$ contains all the cells visited, when computing from $x$, up to time step $t=\max(\Phi_{-k}(x)\sqcup\Phi_k(x))$.
 Then we can see that any configuration $y\in[x\scc{-m}m]$ has the same evolution as $x$ until this time step, and that after that, its head cannot visit cell $-k$ nor $k$, otherwise it would contradict the maximality of $x$.
 We can deduce that the head of $x$ (then also $y$) is outside $\cc{-k}k$ after iteration $t$, otherwise it would be trapped between $-k$ and $k$ and would become periodic.
 We observe, then, that the cells of $\cc{-k}k$ evolve exactly in the same way for configurations $x$ and $y$.
\qed\end{proof}
The converse is untrue: imagine a machine whose head rebounds between two walls, each time shifting them to the left.
Every configuration where the head starts enclosed between two walls is equicontinuous.
Any finite pattern can be extended by adding walls to enclose the head, therefore equicontinuous points are dense, but the head can make an arbitrary number of arbitrarily wide cycles.
\section*{Conclusion}
The complexity of the Turing machine will always be very hard to understand.
In our attempt to treat this issue through the theories of topological and symbolic dynamics, we have found interesting relations between:
\begin{itemize}
 \item the head movements that can be observed during the computation;
 \item the density of equicontinuous points;
 \item the language complexity of the associated subshifts $S_T$ and $S_H$.
\end{itemize} 
These relations introduce a new point of view on how computation is performed.
In addition to generalizing them to more machines, the next step would be to study Turing machines as computing model by introducing a halting state, and to link all of these considerations to the result itself of the computation, and eventually the temporal or spatial complexity of the computation.

\bibliography{Xbib}

\newpage

\appendix\section*{Proofs}
The Ogden Lemma \cite{Odge} is a well-know generalization to the case of pushdown automata of the pumping lemma on finite automata. It can be expressed on paths of the graph as follows.
\begin{lem}\label{l:bomba}
 Consider a DPDA $(A,\Omega,\Gamma,\bot,\lambda,o_0,F)$, and $\pile0\ldots\pile n$ some path of its graph and $I\subset\cc0n$ a subset of distinguished positions of size $\card I>q=2^{\card\Omega^2\card\Gamma^2+1}$.
Then there exist four positions $0\le l_1\le l_2<l_3\le l_4\le n$ and such that:
\begin{enumerate}
 \item $\pilo{l_1}=\pilo{l_2}$;
 \item $\pilo{l_3}=\pilo{l_4}$;
 \item $\forall i\in\cc{l_1}{l_4},\length{\mu^{i}}\ge\length{\mu^{l_1}}$;
 \item $\forall i\in\cc{l_2}{l_3},\length{\mu^{i}}\ge\length{\mu^{l_2}}$.
 \item $\pile0\ldots\pile{l_1}\pile[\tilde]{l_2+1}\ldots\pile[\tilde]{l_3}\pile{l_4+1}\ldots\pile n$ is also a valid path of the graph, where $\tilde\mu^{t}=\mu^{l_1}\mu^{t}\soo{\length{\mu^{l_2}}}{\length{\mu^{t}}}$;
 \item $I\cap\co{l_2}{l_3}\ne\emptyset$;
 \item $\card{I\cap\co{l_1}{l_4}}\le q$;
 \item Either $I\cap\co0{l_1}\ne\emptyset\ne I\cap\co{l_1}{l_2}$ or $I\cap\co{l_3}{l_4}\ne\emptyset\ne I\cap\co{l_4}n$.
\end{enumerate}
\end{lem}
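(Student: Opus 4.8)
The statement is an Ogden-type pumping lemma, and the plan is to prove it directly on the run by exploiting the nested \emph{excursion} structure carried by the stack-height profile $h_t=\card{\mu^t}$. Geometrically, every valid path of the automaton decomposes into brackets: whenever a symbol is pushed at some step and eventually popped at a later step, the two steps delimit an interval on which $h$ never drops below the height at which the symbol sits, and such intervals are properly nested. The four positions we seek are the two ends of two nested brackets carrying the same label, the outer one producing $(l_1,l_4)$ and the inner one $(l_2,l_3)$; note that, the lemma being purely structural, determinism of the automaton plays no role and the argument would work for any pushdown automaton.

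First I would fix the bracket combinatorics: to each push step matched with its pop step I attach the \emph{label} $(\pilo{l},\pilo{l'})\in(\Omega\times\Gamma)^2$ recording the state and read symbol at the two ends. Given two nested brackets on a common chain, an outer one with ends $l_1<l_4$ and an inner one with ends $l_2<l_3$, having equal labels, conditions~1 and~2 hold by definition, while conditions~3 and~4 are exactly the excursion property — $h$ stays above the base of each bracket. Condition~5 is the genuine pumping step: deleting the spine segments $\co{l_1}{l_2}$ and $\co{l_3}{l_4}$ yields a valid path once the part of the stack lying below the retained inner block is rewritten as in $\tilde\mu$, replacing the bottom $\card{\mu^{l_2}}$ symbols — which equal $\mu^{l_2}$ all along $\cc{l_2}{l_3}$ by condition~4 and are therefore never read there — by $\mu^{l_1}$. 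Since transitions depend only on the top symbol and $\pilo{l_1}=\pilo{l_2}$, $\pilo{l_3}=\pilo{l_4}$, every reconnected transition stays legal; this verification is routine once the structure is in place.

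The heart of the matter is to locate such a nested pair while controlling the distinguished set $I$. Following Ogden's method, I would build a \emph{spine}: a maximal descending chain of nested brackets obtained by always entering the sub-block that holds the largest share of distinguished positions, after arranging the sibling excursions into a binary branching pattern so that the usual halving estimate applies. A bracket of the spine is a \emph{branch bracket} when $I$ meets both the block we descend into and some material left behind on the push side or the pop side. Passing a branch bracket at most halves the count of still-reachable distinguished positions, so from $\card I>q=2^{\card{\Omega}^2\card{\Gamma}^2+1}$ the spine must contain strictly more than $\card{\Omega}^2\card{\Gamma}^2$ branch brackets. As each bracket is labelled in $(\Omega\times\Gamma)^2$, a set of size exactly $\card{\Omega}^2\card{\Gamma}^2$, the pigeonhole principle forces two spine branch brackets with the same label; these become the outer and inner brackets of the previous paragraph.

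Reading off conditions~6--8 then dictates \emph{which} repeated pair to keep. Condition~6 is automatic, since the spine keeps descending toward distinguished positions, so $I$ meets $\co{l_2}{l_3}$; condition~8 holds because, for a branch bracket, $I$ meets the deleted segment as well as the outside on the same side; and condition~7 is obtained by choosing the repeated label among the \emph{innermost} $\card{\Omega}^2\card{\Gamma}^2+1$ branch brackets, so that the outer bracket $\co{l_1}{l_4}$ already encloses at most $2^{\card{\Omega}^2\card{\Gamma}^2}\le q$ distinguished positions. The main obstacle I anticipate is exactly this simultaneous bookkeeping: the pumping conditions~1--5 only constrain labels and heights, whereas conditions~6--8 constrain how $I$ splits across the four marks, and the two requirements pull against each other (deep brackets keep $\card{I\cap\co{l_1}{l_4}}$ small but risk starving the deleted segments of distinguished positions). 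Making the binary rearrangement of siblings, the halving estimate and the definition of branch bracket precise enough that a single repeated pair satisfies all of~6--8 at once — while handling the edge cases of height-preserving steps and of the outermost bracket anchored at $\bot$ — is where the care lies; by contrast, the algebraic check of condition~5 is essentially mechanical.
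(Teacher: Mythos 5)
You should know at the outset that the paper does not prove this lemma at all: it is stated as a path reformulation of Ogden's lemma and discharged by a citation, so there is no in-paper argument to compare yours against. Your strategy --- matched push--pop excursions as brackets, labels in $(\Omega\times\Gamma)^2$ playing the role of the triples $[p,A,q]$ of the PDA-to-grammar construction, a spine that always descends into the sub-block holding the most distinguished positions, and a pigeonhole among the innermost branch brackets --- is the textbook proof of Ogden's lemma transported from derivation trees to runs, and it does deliver conditions 1--7: counting from the bottom of the spine, the $j$-th branch bracket encloses at most $2^j$ distinguished positions, so $\card I>2^{k+1}$ with $k=\card\Omega^2\card\Gamma^2$ forces at least $k+2$ branch brackets, the bottom $k+1$ of them contain a repeated label, and the outer member of the repeated pair encloses at most $2^{k+1}=q$ distinguished positions.

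The genuine gap is condition 8. You assert it ``holds because, for a branch bracket, $I$ meets the deleted segment as well as the outside on the same side,'' but your own definition of a branch bracket only guarantees a distinguished position in one of the two deleted segments $\co{l_1}{l_2}$ or $\co{l_3}{l_4}$; it says nothing about $\co0{l_1}$ or $\co{l_4}n$, which lie entirely outside the bracket. To populate $\co0{l_1}$ (resp.\ $\co{l_4}n$) you need a further branch bracket strictly above $l_1$ on the spine that leaves distinguished material on the \emph{same} side (push side for the first disjunct, pop side for the second) as the outer bracket of your repeated pair. This is exactly the delicate point of Ogden's proof: one reserves on the order of $2(k+1)+1$ branch points so that a majority branch in the same direction, pigeonholes only among same-direction branch points lying below the topmost of them, and uses that topmost one to witness the marked position in $\co0{l_1}$ or $\co{l_4}n$. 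Your count produces only $k+2$ branch brackets, enough for the bare pigeonhole but with no room for the direction-majority step; as written, your repeated pair may have its only left-behind distinguished positions in $\co{l_1}{l_2}$ while every other branch bracket sheds its material to the right of $l_4$, in which case neither disjunct of condition 8 holds. (This also indicates that the constant $q$ as stated is calibrated for conditions 1--7 only; the standard constant accommodating the Ogden clause is of the order of $2^{2k+3}$.) A secondary point to nail down is that the pigeonhole must land on genuine excursion endpoints of the run rather than on the artificial nodes introduced by your binary rearrangement of siblings, since otherwise conditions 1--5 are not guaranteed at the chosen positions.
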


If $T$ is a machine with rule $\delta:A\times Q\to A\times Q\times\{-1,0,1\}$ and $\alpha,q\in A\times Q$, then we note $\delta_A(\alpha,q)=\beta$, $\delta_Q(\alpha,q)=p$ and $\delta_D(\alpha,q)=d$ if $\delta(\alpha,q)=(\beta,p,d)$.
If $u=(\overline{u},q,n)\in A^k\times Q\times \Zset$, then we can define the corresponding cylinder in space $X$:
\[[u]_i=\set{y,p,j}{A^\Zset\times Q\times \Zset}{y\in[\overline{u}]_i\et p=q\et (n\in \co{i}{i+\length u}\Rightarrow j=n)}~.\]
Let $\varepsilon$ denote the empty word.

Theorem~\ref{t:zigstack} comes from the following lemmas.

\begin{lem}\label{l:centerDFA}
Let $N$ be a fixed natural number and $T$ a Turing machine.
Given two partial configurations $u=(\overline{u},p,0),v=(\overline v,q,k)\in A^{2N+1}\times Q\times \cc{-N}N$, there exists a DFA $C_{u,v}$ that recognizes the language $\mathcal C_{u,v}$ of the words $(\tau_T\Psi(x))_{j=0}^t$ for $t\in\Nset$, $x\in [u]$ such that $T^t(x)\in [v]$ and for any $j\in \co0t$ the head position of $T^j(x)$ is in $\oo{-N}N$.

Moreover, if $x$ satisfies the conditions of $\mathcal C_{u,v}$, then every $y\in [u]$ also does, with the same time $t$.
\end{lem}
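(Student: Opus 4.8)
The plan is to exploit the \emph{locality} of the computation while the head stays in the window $\cc{-N}N$: as long as the head only reads and writes cells of this window, the head state, the head position and the window content at each time step are functions of $\overline u$ and of the rule $\delta$ alone, and do not depend on the cells of $x$ outside $\cc{-N}N$. First I would make this precise by an induction on the time step $j$. The conditions defining $\mathcal C_{u,v}$ force the head to stay in $\oo{-N}N$ for every $j\in\co0t$, so during the first $t$ steps it only ever reads and writes cells strictly inside the window, and at step $t$ it reads the cell $k\in\cc{-N}N$, which is still inside the window; hence the invariant is maintained up to time $t$. This already yields the ``Moreover'' claim for free: since nothing in this evolution depends on $x$ outside $\cc{-N}N$, every $y\in[u]$ runs through exactly the same sequence of state/read-symbol pairs, reaches $[v]$ at the same time $t$, and keeps its head in $\oo{-N}N$ for $j\in\co0t$; in particular $\tau_T\Psi(y)\sco0{t+1}=\tau_T\Psi(x)\sco0{t+1}$.

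Next I would build $C_{u,v}$ as a direct simulation of this window evolution. Its states are the \emph{window configurations} $(w,s,h)\in A^{2N+1}\times Q\times\cc{-N}N$ together with a handful of sink states, so there are finitely many; the initial state is $(\overline u,p,0)$. On reading a letter $\alpha$ in state $(w,s,h)$, the automaton first checks that $\alpha$ is the pair $(s,w_{h+N})\in Q\times A$ of the current head state and read symbol, rejecting otherwise; then it applies $\delta$ to $(w_{h+N},s)$, rewrites $w_{h+N}$, updates the state, shifts $h$ accordingly, and moves to the resulting window configuration. The one genuine timing issue is that acceptance must test the configuration \emph{from which the last letter was read}, i.e. the one at time $t$, rather than its successor; I would handle this by carrying a one-bit flag recording whether the configuration just read from matches $v$ (same head position $k$ and same window content $\overline v$), and by declaring accepting exactly the states whose flag is set.

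The delicate point, which I expect to cost the most care, is the boundary bookkeeping: the intermediate positions must lie in the \emph{open} window $\oo{-N}N$ while the final one may sit on the boundary $\cc{-N}N$, and the simulation ceases to be window-determined as soon as the head leaves $\cc{-N}N$. I would therefore treat every window configuration with head at $\pm N$ as ``terminal only'': reading a letter from such a configuration still sets the flag (so a legitimate final step landing at $k=\pm N$ is accepted), but its successor is a sink from which any further letter is rejected. This simultaneously forbids a boundary position at any time $j\in\co0t$ and avoids ever invoking a cell outside the window. With this convention a word $W$ of length $t+1$ is accepted iff the deterministic run verifies every letter, keeps the head in $\oo{-N}N$ at times $\co0t$, and lands in a $v$-matching configuration at time $t$; by the locality observation of the first paragraph these are exactly the words $\tau_T\Psi(x)\sco0{t+1}$ for the configurations $x\in[u]$ described in the statement, so $C_{u,v}$ recognizes $\mathcal C_{u,v}$.
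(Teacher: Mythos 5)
Your construction is correct and is essentially the paper's own (two-sentence) argument: the DFA loads $\overline u$ into its state and simulates $T$ on the window $\cc{-N}N$, which is legitimate precisely because the head never consults a cell outside the window before time $t$, and this locality also gives the ``Moreover'' clause. Your extra care with the boundary positions $\pm N$ and with the one-letter offset in the acceptance condition just fills in details the paper leaves implicit.
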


The language $\mathcal C_{u,v}$ can be either empty, a singleton or, when $v$ is periodic for $T$, infinite.
The automaton $C_{u,v}$ simply simulates $T$ by loading $u$ on its memory, and making the partial configuration over cells $\oo{-N}N$ evolve simply by applying the machine rule.
The next lemma corresponds to similar and more evolved proof.

\begin{lem}\label{l:LRcycleDPDA}
Let $N$ be a fixed natural number and let $T$ be a Turing machine that cannot do $1$-zigzags of width $N$.
If we have three partial configurations $u=(\overline{u},p,N),v=(\overline{v},q,k),u'=(\overline{u}',p',0)\in A^{2N+1}\times Q\times \Nset$ such that $u\scc{-N}0 = v\scc{-N}0$ and $\mathcal C_{u',u}\ne\emptyset$, then there exists a DPDA $R_{u,v}$ that recognizes the language $\mathcal R_{u,v}$ of the words $(\tau_T\Psi(x))_{j=0}^t$ for $t\in\Nset$, $x\in [u]$ such that $T^t(x)\in [v]$ and for any $j\in \co0t$, the head position of $T^j(x)$ is strictly positive.

Moreover, if $x$ satisfies the conditions of $\mathcal R_{u,v}$, then every $y$ such that $y\soi0=x\soi0$ also does, with the same time $t$.

Symetrically, if $u=(\overline{u},p,-N),v=(\overline{v},q,k),u'=(\overline{u}',p',0)\in A^{2N+1}\times Q\times \Zset_-$ such that $u\cc0N = v\cc0N$ and $\mathcal C_{u',u}\ne\emptyset$, then there exists a DPDA $L_{u,v}$ that recognizes the language $\mathcal L_{u,v}$ of the words $(\tau_T\Psi(x))_{j=0}^t$ for $t\in\Nset$, $x\in [u]$ such that $T^t(x)\in [v]$ and for any $j\in \co0t$, the head position of $T^j(x)$ is strictly negative.

Moreover, if $x$ satisfies the conditions of $\mathcal L_{u,v}$, then every $y$ such that $y\sio0=x\sio0$ also does, with the same time $t$.
\end{lem}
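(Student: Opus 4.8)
The plan is to build $R_{u,v}$ as a deterministic step-by-step simulator of $T$ that reconstructs the whole computation from its input word and uses its stack to store the portion of the tape that the head will have to re-read later. The first observation is that a word $w=(\tau_T\Psi(x))_{j=0}^t$ records, at each step $j$, the current head state and the symbol being read; together with the rule $\delta$ this determines the symbol written, the next state and the move direction, hence the \emph{entire} trajectory of head positions (starting from $N$) together with every symbol written. So, scanning $w$ from left to right, the automaton can at each step recover the head position relative to $0$ (and thus check that it stays strictly positive on $\co0t$), decide whether the cell about to be read is fresh -- a first visit, necessarily to a cell outside $\cc{-N}N$, where any symbol is admissible -- or already visited, where the read symbol must match the last symbol written there, and detect when the configuration $[v]$ is reached. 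Here the hypothesis $u\scc{-N}0=v\scc{-N}0$ is precisely what ensures that the half-window $\cc{-N}0$, which the head never touches during a strictly-positive excursion, still carries the content prescribed by $v$ at time $t$, while $\mathcal C_{u',u}\ne\emptyset$ ensures that $u$ is a genuine right-exit configuration, so that the window content the simulator starts from is consistent.

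The crux -- and the step I expect to be the main obstacle -- is the structural description of strictly-positive trajectories under the hypothesis that $T$ cannot do $1$-zigzags of width $N$. This hypothesis bounds by $N$ the width of any two \emph{overlapping} cycles, so the head may perform at most one cycle of width exceeding $N$ during the excursion: once it has gone far to the right and started coming back, a second overlapping wide cycle would be a forbidden zigzag. I would turn this into the statement that the excursion consists of a single wide \emph{outbound} (rightward) sweep followed by a single wide \emph{inbound} (leftward) sweep, on top of which every repeated motion over a fixed cell is confined to a band of width $<N$. The cells left behind on the outbound sweep are then re-read on the inbound sweep in exactly the reverse order, which is the last-in-first-out discipline a stack captures, whereas the residual bounded oscillations can be resolved inside a finite window. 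Making the split between the single wide cycle and the bounded local oscillations precise, and deducing the monotone single-sweep structure from the overlapping-cycle bound, is the delicate part of the argument.

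Granting this, I would describe the automaton as follows. Its finite control holds the current head state, a window recording the contents of the $O(N)$ cells surrounding the current head position (enough to resolve all local oscillations, by the band bound above), and a phase bit distinguishing the outbound from the inbound sweep. Its stack holds, from bottom to top and in spatial order, the contents of the cells abandoned during the outbound sweep, so that each resurfaces at the top exactly when the inbound sweep reaches it. Each transition reads one symbol $(p_j,a_j)$, checks $p_j$ against the simulated state, checks $a_j$ against the stored content when the current cell is being revisited (popping the stack on the inbound sweep), applies $\delta$ to update the state, the window and the head position, pushes the abandoned cell's content when the head advances on the outbound sweep, and rejects if the head position ceases to be strictly positive. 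Every one of these actions is forced by the input symbol and the current configuration, so the automaton is deterministic; it accepts as soon as the simulated configuration equals $[v]$, \ie the window matches $\overline v$ and the head sits at $k$. Encoding the stack-bottom conventions and respecting the bound of two pushed symbols per transition are routine.

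It then remains to check that the recognized language is exactly $\mathcal R_{u,v}$: soundness holds because every accepting run is a faithful simulation of a strictly-positive excursion from $[u]$ to $[v]$, and completeness holds because every such excursion, by the structural lemma, produces precisely the outbound-push/inbound-pop pattern that the automaton follows. The ``Moreover'' assertion is then immediate, since the head stays on cells of $\oi0$ throughout $\co0t$: the computation, the output word and the time $t$ depend only on the restriction of the configuration to those cells, hence are unchanged when $x$ is replaced by any $y$ with $y\soi0=x\soi0$. Finally, the symmetric statement about $L_{u,v}$ and $\mathcal L_{u,v}$ follows by applying the reflection $i\mapsto-i$ of the tape, which swaps left and right zigzags and turns $R_{u,v}$ into the desired automaton.
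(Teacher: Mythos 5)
Your overall strategy -- a deterministic on-line simulation of $T$ whose stack records, in last-in-first-out order, the tape cells the head has left behind, with the no-zigzag hypothesis justifying that anything not recorded need never be recalled -- is exactly the strategy of the paper. The paper's automaton keeps in its finite control only the (at most $N$) cells at and to the \emph{right} of the head, stores every cell to the left of the head on the stack (so the head's position \emph{is} the stack height, which also answers how ``position relative to $0$'' can be tracked at all -- a DPDA cannot hold an unbounded integer in its state), pushes on every right move, pops on every left move, and discards the rightmost recorded cell whenever the head moves left with a full window. The hypothesis is then invoked exactly once, as a purely \emph{local} invariant: a cell is discarded only when the head stands $N$ to its left after having visited it, so re-entering it would be precisely a forbidden $1$-zigzag; hence a cell whose content the automaton no longer knows is either genuinely fresh or provably unreachable, and in both cases accepting an arbitrary symbol there is sound.

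The genuine gap in your proposal is that you route everything through a global structural lemma -- ``one wide outbound sweep, one wide inbound sweep, all other motion confined to a band of width $<N$'' -- which you leave unproved and on which both the specification of your automaton (the ``phase bit'', the decision of when to push versus pop, the decision of whether an incoming cell is fresh) and its correctness depend. As stated the lemma is not quite right: the ``inbound sweep'' is not a monotone sweep but a non-increasing ceiling under which the head may keep performing right-cycles of width up to $N-1$, and the first return to any given cell is itself a cycle of unbounded width, so ``every repeated motion over a fixed cell is confined to a band of width $<N$'' only holds for returns \emph{after} that first wide one. A correct version can be extracted (for each cell $c$: once $c$ has been visited and the head has reached $c-N$, cell $c$ is forever forbidden; iterating this gives the descending-ceiling picture), but it is strictly more than what is needed, and until it is stated and proved your automaton is not fully defined and its completeness argument (``every such excursion produces precisely the outbound-push/inbound-pop pattern'') does not go through. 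The paper's one-sided-window design is the way to close this: every push/pop/forget decision is forced by the current move direction alone, no phase information is needed, and the only place the hypothesis enters is the single local claim above. Your treatment of the ``Moreover'' clause and of the symmetric automaton $L_{u,v}$ is fine.
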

\begin{proof}
We will do the proof only for $\mathcal R_{u,v}$.
The automaton registers the states of the tape and updates them at each step.
The states of the cells at the right of the head will be registered in the internal state of the automaton, while the states of the cells at the left will be stocked in the stack.
The position of the head is given by the stack depth; in this way the head is always reading the symbol $w_0$.

We define actually an automaton in a slightly different model than previously defined. The initial and terminal states actually involve the content of the stack: we initially push a given finite word into the stack, and to accept a word, we verify if both the terminal state and the stack content are in some given finite sets. It is easy to see, by considering some complex encoding in the stack alphabet $\Gamma$, that this model can be simulated by the previous one.
The automaton $R_{u,v}$ has input alphabet $A\times Q$, states set $\Omega=(A^{\le N}\times Q)\sqcup\{REJECT\}$, stack alphabet $A^{\le N}$; its initial state is $o=(\overline{u}_N,p)$ and initial stack content the mirror of $\overline u\sco1N$; it terminates when the pair composed of the internal state and the stack content is in $F=\sett{((w,q),\mu)}{v=(\mu w,q,\length\mu+1)}$; its transition function $\lambda$ is defined by:
\[\lambda((\alpha,p),(w,q),\beta)=\soit{
((w\sco1N,\delta_Q(\alpha,q)),\delta_A(\alpha,p)\beta) &
\textrm{if }\both{p=q\et\delta_D(\alpha,p)=1\\w=\varepsilon\text{ or } w_0=\alpha}\\
((\beta\delta_A(\alpha,p)w\scc1{N-2},q'),\varepsilon) &
\textrm{if }\both{p=q\et\delta_D(\alpha,p)=-1\\w=\varepsilon\text{ or } w_0=\alpha}\\
REJECT & \textrm{in any other case.}
}\]


Let us denote by $\mu^j\in A^*$ and $(w^j,q^j)\in A^{\le N}\times Q$ the respectively stack content and internal state at iteration $j\in\Nset$.
\begin{itemize}
\item
We will prove by induction on $j\in\Nset$, that if $x$ satifies the conditions of $\mathcal R_{u,v}$, then $T^j(x) \in [(\mu^j w^j,q^j,\length{\mu^j}+1)]$.
\\
For $j=0$ it is clear, because $x\in [u] \subset [(\mu^0w^0,q^0,N)]$.
Let us suppose that it is true for a given $j\in\Nset$, and let us prove it for $j+1$.
\\
If $w^j\ne\varepsilon$, the head is reading the symbol $w^j_0$ and is in state $q^j$, hence the only input accepted is $(w^j_0,q^j)$.
In this case, the head will pass to state $\delta_Q(w^j_0,q^j)$ and will move to $\delta_D(w^j_0,q^j)$.
If $\delta_D(w^j_0,q^j)=1$ the automaton must push $\delta_A(w^j_0,q^j)$ and ``erase'' $w^j_0$.
If $\delta_D(w^j_0,q^j)=-1$ the automaton must replace $w^j_0$ by $\delta_A(w^j_0,q^j)$, pop a symbol and concatenate it to $w^j$.
\\
If $w^{j}=\varepsilon$, the automaton will accept $(\alpha,p)$ only if $p=q^j$; in this case it will work, assuming that $w^{j}=\alpha$.
\item
Now we need to prove that every word recognized by $R_{u,v}$ is in fact in $\mathcal R_{u,v}$.
We use recurrence to define the configuration $x$ that certifies this.
The first condition is that $x\in [(\overline{u}_N,p,N)]$, it follows from the first verification: $(\alpha,p) = (w^0_0,q^0) = (\overline{u}_N,p)$.
Let us suppose that we have defined $x=(\overline x,p,N)$ such that $T^s(x) \in [(\mu^s w^s,q^s,\length{\mu^s}+1)]$ for every $s\le j$ and that the set of cells visited by the head is $\cc ri$ for some $r<N$ and $i\ge\length{\mu^jw^j}+1$.
Let us prove that the same is true for $j+1$ for a suitable $x'$.
We can note that the condition $T^j(y) \in [(\mu^j w^j,q^j,\length{\mu^j}+1)]$ holds for any $y$ satisfying $y\scc0i=x\scc0i$.
\\
If $w^j=\varepsilon$, then the automaton will accept any pair $(\alpha,p)$ with $p=q^j$ if cell $k=\length{\mu^r}+1$ has already been visited; the value of $\overline{x}_k$ is important and cannot be defined to be $\alpha$. But if $k$ was visited, its value was registered in $w^s$ for somme $s$, and it has been erased because the head has moved to $k-N$ in some moment (then $k>N$).
The existence of $u' = (\overline{u}',p',0)$ such that $\mathcal C_{u',u} \ne\emptyset$ insures that the head has moved from $0$ to $N$, which means that, the head has made a $1$-zigzag to the right between cells $k-N$ and $k$, with is forbiden by hypothesis. 
Hence $k$ has not been visited before ($i<k$) and we can define $x'_k=\alpha$.
\\
When $w^j\ne\varepsilon$, we know that the value of cell $k$ in $T^j(x)$ is $w^j_0$. The automaton will only accept the pair $(\alpha,p) = (w^j_0,q^j)$.
This and the former construction insure that $T^{j+1}(x) \in [(\mu^{j+1} w^{j+1},q^{j+1},\length{\mu^{r+1}}+1)]$.
\qed\end{itemize}\end{proof}

\begin{proof}[of Theorem~\ref{t:zigstack}]

($\Rightarrow$)
Since $S_T$ does not regards the head position, we can suppose that the head starts at 0.
Let $x$ be a configuration.
\begin{itemize}
\item If the head does not exit the interval $\cc{-N}N$ during the whole evolution, then only $M$ is needed to recognize $\tau_T\Psi(x)$, we conclude that $\tau_T\Psi(x)\scc0k\in \mathcal C_{x\scc{-N}N,T^k(x)\scc{-N}N}$, for every $k\in \Nset$.
\item If the head exits $\cc{-N}N$ for the first time at iteration $t_0$, by the right side, and never comes back to cell $0$ after that, then $\tau_T\Psi(x)\scc0k\in \mathcal C_{x\scc{-N}N,T^{t_0}(x)\scc{-N}N}\mathcal R_{T^{t_0}(x)\scc{-N}N,T^{k}(x)\scc{-N}N}$, for any $k$.
\item If the head exits $\cc{-N}N$ for the first time at iteration $t_0$, by the right side, comes back to $0$ at iteration $t_1$, and never exit $\cc{-N}N$ again, then 
$\tau_T\Psi(x)\scc0k\in \mathcal C_{x\scc{-N}N,T^{t_0}(x)\scc{-N}N}\mathcal R_{T^{t_0}(x)\scc{-N}N,T^{t_1}(x)\scc{-N}N}\mathcal C_{T^{t_1}(x)\scc{-N}N,T^{k}(x)\scc{-N}N}$, for any $k$.
\item If the head exits $\cc{-N}N$ for the first time at iteration $t_0$, by the right side, comes back to $0$ at iteration $t_1$, and exits $\cc{-N}N$ again at $t_2$ and does not ever come back to $0$, then $\tau_T\Psi(x)\scc0k$ is in the concatenation of the languages $\mathcal C_{x\scc{-N}N,T^{t_0}(x)\scc{-N}N}$, $\mathcal R_{T^{t_0}(x)\scc{-N}N,T^{t_1}(x)\scc{-N}N}$, $\mathcal C_{T^{t_1}(x)\scc{-N}N,T^{t_2}(x)\scc{-N}N}$ and $\mathcal L_{T^{t_2}(x)\scc{-N}N,T^{k}(x)\scc{-N}N}$, for any $k$.
\item If the head exits $\cc{-N}N$ for the first time at iteration $t_0$, by the right side, comes back to $0$ at iteration $t_1$, exits $\cc{-N}N$ again at $t_2$, and comes back to $0$ at $t_3$, then $\tau_T\Psi(x)\scc0k$ is in the concatenation of $\mathcal C_{x\scc{-N}N,T^{t_0}(x)\scc{-N}N}$, $\mathcal R_{T^{t_0}(x)\scc{-N}N,T^{t_1}(x)\scc{-N}N}$, $\mathcal C_{T^{t_1}(x)\scc{-N}N,T^{t_2}(x)\scc{-N}N}$, $\mathcal L_{T^{t_2}(x)\scc{-N}N,T^{t_3}(x)\scc{-N}N}$ and $\mathcal C_{T^{t_3}(x)\scc{-N}N,T^{k}(x)\scc{-N}N}$, for any $k$.
\end{itemize}

The analogous case when the head first exits $\cc{-N}N$ through cell $-N-1$ can be treated in a similar way. We conclude that for any $x$ and any $k$, the word $\tau_T\Psi(x)\scc0k$ is in the language
\begin{eqnarray*}&&
\bigcup_{u^0,w}\mathcal C_{u^0,w} \bigcup
\bigcup_{u^0,v^0,w}\mathcal C_{u^0,v^0}\mathcal R_{v^0,w} \bigcup
\bigcup_{u^0,v^0,u^1,w}\mathcal C_{u^0,v^0}\mathcal R_{v^0,u^1}\mathcal C_{u^1,w} \bigcup\\&&
\bigcup_{u^0,v^0,u^1,b^1,w}\mathcal C_{u^0,v^0}\mathcal R_{v^0,u^1}\mathcal C_{u^1,b^1}\mathcal L_{b^1,w} \bigcup
\bigcup_{u^0,v^0,u^1,b^1,u^2,w}\mathcal C_{u^0,v^0}\mathcal R_{v^0,u^1}\mathcal C_{u^1,b^1}\mathcal L_{b^1,u^2}\mathcal C_{u^2,w} \bigcup\\&&
\bigcup_{u^0,b^0,w}\mathcal C_{u^0,b^0}\mathcal L_{b^0,w} \bigcup \bigcup_{u^0,b^0,u^1,w}\mathcal C_{u^0,b^0}\mathcal L_{b^0,u^1}\mathcal C_{u^1,w} \bigcup\\&&
\bigcup_{u^0,b^0,u^1,v^1,w}\mathcal C_{u^0,b^0}\mathcal L_{b^0,u^1}\mathcal C_{u^1,v^1}\mathcal R_{v^1,w} \bigcup
\bigcup_{u^0,b^0,u^1,v^1,u^2,w}\mathcal C_{u^0,b^0}\mathcal L_{b^0,u^1}\mathcal C_{u^1,v^1}\mathcal R_{v^1,u^2}\mathcal C_{u^2,w}~,
\end{eqnarray*}
where $u^i\in A^{2N+1}\times Q\times \{0\}$, $v^i\in A^{2N+1}\times Q\times \{N\}$, $b^i\in A^{2N+1}\times Q\times \{-N\}$, and $w\in A^{2N+1}\times Q\times \cc{-N}N$.
This language is recognizable by a DPDA since it is a concatenation and union of languages which are recognizable by DPDAs, thanks to Lemmas~\ref{l:centerDFA} and~\ref{l:LRcycleDPDA}.

We have to prove now that this union of languages contains only words of $\lang(S_T)$. The proof is similar for each of the listed languages; we will develop it only for
$\mathcal C_{u^0,v^0}\mathcal R_{v^0,u^1}\mathcal C_{u^1,b^1}\mathcal L_{b^1,w}$.

From Lemma~\ref{l:centerDFA}, we know that if $\mathcal C_{u^0,v^0}\ne\emptyset$, then any $x\in [u^0]$ will satisfy $\tau_T\Psi(x)\scc0{t_0}\in \mathcal C_{u^0,v^0}$ if the head position at time $t_0$ is $N$.
From Lemma~\ref{l:LRcycleDPDA}, if $\mathcal R_{v^0,u^1}\ne\emptyset$, then there exists $y\in [v^0]$ and $t_1$ such that $T^{t_1}(y)\in [u^1]$ and $\tau_T\Psi(x)\scc{t_0}{t_1}\in \mathcal R_{v^0,u^1}$.
We define $x\sci{N}=y\sci N$, which will satisfy $\tau_T\Psi(x)\scc0{t_1}\in \mathcal C_{u^0,v^0}\mathcal R_{v^0,u^1}$.
From the same lemmas, we know that the values of $x$ on $\ic{-N}$ are still ``free'' and $T^{t_1}(x)\in [u^1]$ gives $\tau_T\Psi(x)\scc{t_1}{t_2}\in \mathcal C_{u^1,b^1}$, where $t_2$ is the instant in which the head reaches the cell $-N$ for the first time.

We can suppose that $\mathcal L_{b^1,w}$ is not empty -- otherwise the result is trivial. Then there exists $y'$ such that $T^{k}(y')\in [w]$ and $\tau_T\Psi(x)\scc{t_2}{k}\in \mathcal L_{b^1,w}$.
The values of $y'$ over $\oi{-N}$ are not important and we can fix them to those of $T^{t_2}(x)$, or in other words, to define $x\sic{-N}=y\sic{-N}$. We obtain
$$\tau_T\Psi(x)\scc0k\in\mathcal C_{u^0,v^0}\mathcal R_{v^0,u^1}\mathcal C_{u^1,b^1}\mathcal L_{b^1,w}~.$$
This completes the proof.

($\Leftarrow$)
Let us assume that the language of $S_T$ is recognized by some DPDA $M$, that $p$ is as in Lemma~\ref{l:bomba}, and that the machine can do a $1$-zigzag of width $N=p+3$;  we can easily find some configuration $x$ with time steps $0<t_1<t_2<t_3$ such that the machine visits cell $1$ at time $0$, cell $N$ at times $t_1$ and $t_3$, and cell $0$ at time $t_2$.
It can also be assumed that the zigzag is minimal, in the sense that no other configuration satisfies the condition with a lower $t_3$. Moreover, we can assume that $t_1$ is the last time when cell $N$ is visited before $t_3$, and $t_2$ is the first time when cell $0$ is visited. Note that $t_2-t_1\ge N$. Let $c=\pile0\ldots\pile{t_3}$ the corresponding path in the graph of $M$.

The key point of the proof is that, thanks to the determinism of the automaton, given $w \in\lang(S_T)$, the $i-th$ cell is visited by the head for the fist time if and only if the corresponding vertex in the graph of $M$ has out-degree more than $1$.
Since the out-degree of a vertex $(q,u)$ of $M$ depends only on $(q,u_0)$.
Let $V$ be the set of vertices with out-degree $1$, and $L$ be the subset of $V$ corresponding to vertices whose unique out-neighbor is not in $V$ and such that this unique transition corresponds to a left movement of the head. These vertices represent cells which are at the left extremity of some visited zone.
For instance, note that the vertices $\pile{t_1+1},\ldots,\pile{t_2-1}$ are in $V$ since the corresponding visited cells are between $1$ and $N$, and $\pile{t_2-1}$ is the first vertex of the path $c$ to belong to $L$.

If we apply Lemma~\ref{l:bomba} with $I=\oo{t_1}{t_2-1}$, we obtain time steps $0\le l_1\le l_2<l_3 \le l_4<t_3$ such that 
$$\tilde c=\pile0\ldots\pile{l_1}\pile[\tilde]{l_2+1}\ldots\pile[\tilde]{l_3}\pile{l_4+1}\ldots\pile{t_3}$$
is a valid path in the graph of $M$, \ie it can be obtained from some configuration $\tilde x$, which we can suppose to have the head in cell $1$ without loss of generality.

\begin{itemize}
\item First, suppose $l_4\ge t_2-1$. 
Since $\card{I\cap\oc{l_1}{l_4}}\le p$, we must have $l_1>t_1$.
Moreover, the nonemptiness of $I\cap\oc{l_2}{l_3}$ gives $t_1<l_1\le l_2<t_2-1\le l_4\le t_3$.
The vertices of $d=\pile{l_1}\ldots\pile{t_2-1}$ are in $V$, then $d$ is the only subpath of this length starting at $\pile{l_1}$.
Thus, $\pile{l_1}\pile[\tilde]{l_2+1}\ldots\pile[\tilde]{t_2-1-l_2+l_1}=d$.
In particular, $\pile[\tilde]{t_2}=\pile{t_2-l_2+l_1}$.
From the lemma, $\pilo[\tilde]{t_2}=\pilo{t_2}$; the same automaton rule is applied in both vertices, and since $\pile{t_2}$ is not in V, we conclude that  $\pile{t_2-l_2+l_1}\not \in V$.
This results in $l_1=l_2$, and from the Ogden Lemma we get $l_3<l_4<t_2$, which is a contradiction.

\item Now suppose that $l_4<t_2$. As no vertex of $\tilde c$ is in $L$ before $\pile{t_2-1}$, we can see that, in this path too, the vertex $\pile{t_2}$ corresponds, at time $t_2-l_4+l_3-l_2+l_1$, to the first visit of cell $0$ -- at the first time we go more to the left than the visited zone.
Since both paths coincide after that, the head does the same movements, and we obtain that its position at the last vertex $\pile{t_3}$ of path $\tilde c$ is $N$. But, from path $c$, we know that $\pile{t_3}\in V$, so on $\tilde x$ too the machine had already visited cell $N$ before arriving on this vertex. It could not be after time $t_2-l_4+l_3-l_2+l_1$, since from then on we have followed the same positions as in $c$, hence $\tilde c$ represents a $1$-zigzag; from the last point of the lemma, it is shorter than $c$, so $\tilde x$ satisfies the construction hypotheses of $x$ but contradicts its minimality.
\qed
\end{itemize}
\end{proof}


\begin{proof}[of Theorem~\ref{t:zigsof}]
Let us define the following languages.

$$\overline{\mathcal R_{u,v}}=\sett{u_0^k}{\exists w\in \mathcal R_{u,v},\length w=k}$$
$$\overline{\mathcal L_{u,v}}=\sett{u_0^k}{\exists w\in \mathcal L_{u,v},\length w=k}$$

It is a context-free language since it is the transformation of a context-free language through a letter morphism.
It is also a regular language because it uses a single symbol $u_0$.
If $x$ and $t$ satisfy the conditions of $\mathcal R_{u,v}$, then $\tau_H\Phi(x)\scc0{t-1}\in\overline{\mathcal R_{u,v}}$.

We also define the language $\mathcal C_{u,v}$ of the words $\tau_H\Phi(x)\soo0t$ with $t\in\Nset$ and $x\in [u^0]$ such that $T^t(x)\in [u^1]$ and for any $j\in \co0t$, the head of $T^j(x)$ is in $\oo{-N}N$.

It is recognized by an automaton that simulates $M$ and accepts a pair $(\alpha,p)$ if and only if the current head position is $0$, and $p$ and $\alpha$ match the simulation.

If the head starts at cell $0$, the analogous concatenation and union of the $\overline{\mathcal C}$s, $\overline{\mathcal R}$s and $\overline{\mathcal L}$s would represent $\lang(S_H)$. But if the head does not start at $0$, we need to consider, for $u=(\overline{u},p,0)\in A^{2N+1}\times Q\times \{0\}$, the language $\mathcal B_u$ of the words $\overline{u}_0^{t}$ for which there exists $x$ with $T^t(x)\in [u]$ and for any $j<t$, the head of $T^j(x)$ is not in cell $0$.
$\mathcal B_u$ represents the set of sequences of states observed at cell $0$ until the head reaches it, when the partial configuration $u$ is observed in $\cc{-N}N$.
$\mathcal B_u$ is always an nonempty ``interval'', \ie $\mathcal B_u=\sett{u_0^t}{0\le t\le n}$ for some $n\in\Nset$ which may be $0$ -- if $u$ is a ``garden of Eden''.


Since $\mathcal B_u$ is either finite or equal to $\sett{u_0^t}{t\in \Nset}$, it can be recognized with a DFA $B_u$.
$\lang(S_H)$ will be the concatenation and union of $\mathcal B$s and the other languages.

Globally, we obtain that $\tau_F\Phi(x)\scc0k$ is in the following union:
\begin{eqnarray*}
&& \{x_0^k\}\bigcup
\bigcup_{u^0,w} \mathcal B_{u^0}\overline{\mathcal C_{u^0,w}} \bigcup
\bigcup_{u^0,v^0,w}\mathcal B_{u^0}\overline{\mathcal C_{u^0,v^0}}\overline{\mathcal R_{v^0,w}} \bigcup
\bigcup_{u^0,v^0,u^1,w}\mathcal B_{u^0}\overline{\mathcal C_{u^0,v^0}}\overline{\mathcal R_{v^0,u^1}}\overline{\mathcal C_{u^1,w}} \bigcup\\
& & \bigcup_{u^0,v^0,u^1,b^1,w}\mathcal B_{u^0}\overline{\mathcal C_{u^0,v^0}}\overline{\mathcal R_{v^0,u^1}}\overline{\mathcal C_{u^1,b^1}}\overline{\mathcal L_{b^1,w}} \bigcup
\bigcup_{u^0,v^0,u^1,b^1,u^2,w}\mathcal B_{u^0}\overline{\mathcal C_{u^0,v^0}}\overline{\mathcal R_{v^0,u^1}}\overline{\mathcal C_{u^1,b^1}}\overline{\mathcal L_{b^1,u^2}}\overline{\mathcal C_{u^2,w}} \bigcup\\
& & \bigcup_{u^0,b^0,w}\mathcal B_{u^0}\overline{\mathcal C_{u^0,b^0}}\overline{\mathcal L_{b^0,w}} \bigcup
\bigcup_{u^0,b^0,u^1,w}\mathcal B_{u^0}\overline{\mathcal C_{u^0,b^0}}\overline{\mathcal L_{b^0,u^1}}\overline{\mathcal C_{u^1,w}} \bigcup\\
& & \bigcup_{u^0,b^0,u^1,v^1,w}\mathcal B_{u^0}\overline{\mathcal C_{u^0,b^0}}\overline{\mathcal L_{b^0,u^1}}\overline{\mathcal C_{u^1,v^1}}\overline{\mathcal R_{v^1,w}} \bigcup
\bigcup_{u^0,b^0,u^1,v^1,u^2,w}\mathcal B_{u^0}\overline{\mathcal C_{u^0,b^0}}\overline{\mathcal L_{b^0,u^1}}\overline{\mathcal C_{u^1,v^1}}\overline{\mathcal R_{v^1,u^2}}\overline{\mathcal C_{u^2,w}}~.
\qed
\end{eqnarray*}
\end{proof}

\end{document}